\documentclass[letterpaper,11pt]{article}

\usepackage{cmap}
\usepackage[utf8]{inputenc}
\usepackage[T1]{fontenc}
\usepackage{amsmath,amsthm,amssymb,amsfonts}
\usepackage{mathtools} 
\usepackage{parskip}
\usepackage{enumitem}
\setlist{nosep}
\usepackage{graphicx}
\usepackage[english]{babel}
\usepackage{xcolor}
\usepackage[colorlinks,citecolor=blue,linkcolor=red]{hyperref}
\usepackage[margin=0.94in]{geometry}
\usepackage[all]{nowidow}
\usepackage[protrusion=true,expansion=true]{microtype}
\usepackage{float}
\newcommand{\eq}[1]{\hyperref[eq:#1]{(\ref*{eq:#1})}}
\renewcommand{\sec}[1]{\hyperref[sec:#1]{Section~\ref*{sec:#1}}}
\newcommand{\thm}[1]{\hyperref[thm:#1]{Theorem~\ref*{thm:#1}}}
\newcommand{\lem}[1]{\hyperref[lem:#1]{Lemma~\ref*{lem:#1}}}
\newcommand{\cor}[1]{\hyperref[cor:#1]{Corollary~\ref*{cor:#1}}}
\newcommand{\itm}[1]{\hyperref[itm:#1]{\ref*{itm:#1}}}
\newcommand{\app}[1]{\hyperref[app:#1]{Appendix~\ref*{app:#1}}}
\newcommand{\dfn}[1]{\hyperref[dfn:#1]{Definition~\ref*{dfn:#1}}}
\newcommand{\fig}[1]{\hyperref[fig:#1]{Figure~\ref*{fig:#1}}}
\newcommand{\clm}[1]{\hyperref[clm:#1]{Claim~\ref*{clm:#1}}}
\newcommand{\prot}[1]{\hyperref[prot:#1]{Protocol~\ref*{prot:#1}}}
\newcommand{\rmk}[1]{\hyperref[rmk:#1]{Remark~\ref*{rmk:#1}}}
\newcommand{\prop}[1]{\hyperref[prop:#1]{Proposition~\ref*{prop:#1}}}

\theoremstyle{definition}
\newtheorem{theorem}{Theorem}[section]
\newtheorem{lemma}[theorem]{Lemma}
\newtheorem{definition}[theorem]{Definition}
\newtheorem{proposition}[theorem]{Proposition}
\newtheorem{corollary}[theorem]{Corollary}

\newtheorem{remark}[theorem]{Remark}

\providecommand{\ket}[1]{\vert #1\rangle}
\renewcommand{\ket}[1]{\vert #1\rangle}
\providecommand{\bra}[1]{\langle #1\vert}
\renewcommand{\bra}[1]{\langle #1\vert}
\DeclareMathOperator{\tr}{tr}
\DeclareMathOperator{\supp}{supp}
\DeclareMathOperator*{\Exp}{\mathbb{E}}
\renewcommand{\epsilon}[0]{\varepsilon}
\newcommand{\Var}[0]{\mathrm{Var}}
\newcommand{\Cov}[0]{\mathrm{Cov}}

\newcommand{\class}[1]{\mathsf{#1}}
\renewcommand{\P}[0]{\class{P}}
\newcommand{\BQP}[0]{\class{BQP}}
\newcommand{\DTIME}[0]{\class{DTIME}}
\newcommand{\BQTIME}[0]{\class{BQTIME}}
\newcommand{\q}[0]{\class{q}}
\renewcommand{\r}[0]{\class{r}}
\newcommand{\bit}[0]{{\{0,1\}}}
\newcommand{\F}[0]{\mathbb{F}_2}
\newcommand{\N}[0]{\mathbb{N}}
\newcommand{\ceil}[1]{\lceil #1 \rceil}
\newcommand{\floor}[1]{\lfloor #1 \rfloor}
\newcommand{\idx}[1]{\langle #1\rangle}
\newcommand{\lossfn}[0]{\Phi}
\newcommand{\Ball}[0]{\mathcal{B}}
\newcommand{\Vol}[0]{V}
\newcommand{\nc}[0]{\mathrm{nc}}
\newcommand{\errp}[0]{\mathrm{err}}
\newcommand{\Enc}[0]{\mathsf{Enc}}
\newcommand{\Dec}[0]{\mathsf{Dec}}
\newcommand{\cdec}[0]{c_{\mathrm{dec}}}

\floatstyle{boxed}
\newfloat{boxed}{tbp}{loa}
\floatname{boxed}{Protocol}

\title{No Size-Preserving Amplification with Quantum Advice}
\author{Shih-Han Hung%
\thanks{Both authors contributed equally to this work.} \kern0.2em%
\thanks{Department of Electrical Engineering and Center for Quantum Science and Engineering, National Taiwan University. Email: shihhanh@ntu.edu.tw.}
\and
Han-Hsuan Lin
\footnotemark[1] \kern0.2em%
\thanks{National Tsing Hua University. National Center for Excellence in Quantum Information Science and Engineering. Email: linhh@cs.nthu.edu.tw.}
}
\date{}

\begin{document}

\maketitle

\begin{abstract}
Marriott and Watrous showed that quantum Merlin--Arthur games admit generic error reduction without increasing witness size [Computational Complexity, 2005]. In this work, we show that this state-size-preserving amplification property does not hold for polynomial-time quantum computation with quantum advice.
In particular, we present decision problems for which even a vanishing additive error reduction requires longer advice.
More precisely, for every polynomially bounded advice length $m(n)=\omega(1)$ and every error bound $\varepsilon(n)$ that stays below $1/2$ by at least an inverse polynomial, there is a positive function $\delta$ with $\delta(n)=O\bigl(\min\{(\log m/m)^{1/4},\ \sqrt{\log m/m}\,/(1/2-\varepsilon(n))\}\bigr)$ such that $\mathsf{BQP}_{\varepsilon}/\mathsf{q}m \subsetneq \mathsf{BQP}_{\varepsilon + \delta}/\mathsf{q}m$; for constant $\varepsilon$ the gap is $O(\sqrt{\log m/m})$.
Here, $\mathsf{BQP}_\varepsilon/\mathsf{q}m$ is the class of languages recognizable with error at most $\varepsilon(n)$ by a polynomial-time quantum algorithm with an $m(n)$-qubit advice state that only depends on the input length $n$.
We show this by proving a stronger separation $\mathsf{P}_{\varepsilon+\delta}/\mathsf{r} m \not\subset \mathsf{BQP}_{\varepsilon}/\mathsf{q} m$, where $\mathsf{P}_{\varepsilon}/\mathsf{r}m$ is the class of languages recognizable with error at most $\varepsilon(n)$ by a deterministic polynomial-time algorithm with an $m(n)$-bit advice string sampled from a distribution that depends only on $n$.
\end{abstract}

\section{Introduction}
\label{sec:introduction}

Error reduction is a basic tool in randomized and quantum computation. By repeating a bounded-error computation and taking a majority vote, one can usually obtain exponentially small error at polynomial computational cost. When a computation receives an auxiliary quantum state, however, repetition also raises a question about the amount of quantum information required: independent executions may require independent copies of a state that cannot, in general, be cloned~\cite{WZ82,Dieks82}. Marriott and Watrous~\cite{MW05} showed that quantum Merlin--Arthur verification admits a remarkable form of amplification: both completeness and soundness errors can be reduced exponentially without increasing the witness length. Their theorem makes the error parameters of $\class{QMA}$ robust even when the witness size is fixed, and yields consequences such as the equivalence of $\class{QMA}$ with logarithmic-size witnesses and $\BQP$.

This result motivates a corresponding question for computation with quantum advice. A quantum advice state supplies information that depends on the input length but is otherwise unrestricted in its preparation. Can additional computation make this information more reliable without requiring a larger advice state? The usual class $\class{BQP}/\class{qpoly}$ does not resolve this question. Its polynomial advice allowance absorbs the cost of supplying polynomially many independent copies, so standard repetition already gives error reduction within that class~\cite{NY04,Aaronson05}. To study an analogue of Marriott--Watrous amplification, one must instead keep track of the advice length itself.

We therefore write $\class{BQP}_{\epsilon}/\q m$ for the class of languages decided by a polynomial-time quantum algorithm using an advice state on at most $m(n)$ qubits, with error at most $\epsilon(n)$ on every input of length $n$. The state may be mixed and depends only on $n$. For error bounds $\epsilon<\epsilon'$, size-preserving amplification would imply
\[
    \class{BQP}_{\epsilon'}/\q m
    \subseteq
    \class{BQP}_{\epsilon}/\q m.
\]
This formulation allows both the algorithm and the advice state to change, while preserving the advice-length bound. It thus asks whether the same amount of advice can support more accurate computation, even after an arbitrary redesign of its encoding and use. A failure of this inclusion rules out any general amplification procedure satisfying that resource constraint.

There is a substantive distinction between witnesses and advice. A $\class{QMA}$ witness may depend on the particular input, and soundness must hold for every witness on a no-instance. Quantum advice must support correct decisions on all inputs of the same length, but the algorithm has no correctness obligation on other advice states~\cite{Aaronson05}. Consequently, the guarantees required for advice computation do not follow directly from the existential completeness and universal soundness guarantees amplified by Marriott and Watrous.

We show that size-preserving amplification fails for quantum advice, even for a vanishing additive improvement in error. For every polynomially bounded advice length $m(n)=\omega(1)$ and fixed target error $0<\epsilon<1/2$, our construction gives separations of the form
\[
    \class{P}_{\epsilon+\delta}/\r m
    \not\subset
    \class{BQP}_{\epsilon}/\q m,
    \qquad
    \delta(m)=O_\epsilon\!\left(\sqrt{\frac{\log m}{m}}\right),
\]
with $\epsilon+\delta(m)<1/2$ for sufficiently large $m$. Here $\mathrm{P}_{\epsilon+\delta}/r m$ denotes deterministic polynomial-time computation supplied with one $m(n)$-bit advice string sampled from a distribution that depends only on $n$; the error probability is over this sample. Such randomized classical advice is a special case of mixed-state quantum advice. Hence the separation also gives
\[
    \class{BQP}_{\epsilon}/q m
    \subsetneq
    \class{BQP}_{\epsilon+\delta}/q m.
\]
In particular, the obstruction persists even when the larger-error computation uses only classical information. Copying a sampled advice string does not supply independent samples from its advice distribution, so ordinary repetition does not remove this obstruction.

For the lower bound, we adapt the QRAC-based diagonalization of Nishimura and Yamakami~\cite[Theorem~5.1]{NY04}, which turns an information-theoretic limit on random access encoding into a lower bound on quantum advice. At a selected input length, we regard the language's answers on $k$ designated inputs as a $k$-bit string. An algorithm that answers any one of these inputs using an $m$-qubit advice state acts as a random access decoder for that string. Nayak's lower bound~\cite{Nayak99} states that encoding every $k$-bit string so that each requested bit can be recovered with error at most $\epsilon$ requires
\[
    m\geq \bigl(1-H(\epsilon)\bigr)k,
\]
where $H$ is the binary entropy function. Below this threshold, every fixed decoder must fail on at least one string, regardless of its choice of advice. Choosing such a string at a separate input length for each machine gives a language outside the smaller-error advice class. This argument concerns recovery of an individually requested bit, exactly matching the requirement imposed by advice computation.

For the larger-error computation, we need random access encodings approaching the same entropy bound. Ambainis, Nayak, Ta-Shma, and Vazirani~\cite[Theorem~5.2]{ANTV02} showed that randomized classical encodings attain this bound up to an additive logarithmic term, using covering codes and symmetrization. This identifies the information-theoretic mechanism behind the separation: a small increase in error can permit a string to fit into an advice budget that is insufficient at the smaller error. Applying this mechanism to polynomial-time computation also requires an efficient, uniform decoder and careful accounting for the information that specifies the code and its randomization.

\subsection{Technical overview}
To obtain a small separation gap, we combine a second-moment covering
argument inspired by Potukuchi and Zhang~\cite{PZ20} with Toeplitz-based
universal hashing~\cite{MNT93,Hayashi11} and the random-access-code
symmetrization of Ambainis et al.~\cite{ANTV02}.
We construct efficiently decodable random access codes from a family
of systematic linear codes whose codewords have the form $(u,Tu)$,
where $T$ is a random Toeplitz matrix.
This representation has two useful features: for codewords of length
$\ell$, the matrix can be specified using only $\ell-1$ bits, and any
requested codeword coordinate can be reconstructed efficiently from
that description and $u$.
The universality property yields a second-moment bound on the expected
distance of a uniformly random word from the random code.
Although finding a nearest codeword may be computationally expensive,
this search takes place during advice preparation, which need not be
efficient.
Decoding with the advice only reconstructs the requested
coordinate.

We then adapt the symmetrization idea of Ambainis et al.~\cite{ANTV02} and combine
it with a block construction, in which the string to be encoded is
divided into blocks of length $\ell$ and each block is encoded
separately.
A random additive mask makes each fixed input block uniformly
distributed, while a random cyclic shift spreads the expected
reconstruction error evenly over its coordinates.
Together, these operations turn the average-distance estimate into
an error guarantee for every input string and every requested bit.
Crucially, the same matrix, mask, and shift can be reused across all
blocks: the required guarantee is marginal for each query and does
not require independent errors across blocks.
Their descriptions therefore form a single shared header whose cost
is amortized over the whole encoding.

The block length controls the resulting tradeoff. Larger blocks reduce the redundancy of the underlying code, but increase the shared header and the work needed to reconstruct a coordinate. For an encoded string of length $k$ and blocks of length $\ell$, the relevant normalized overhead has order
\[
    \frac{\log\ell}{\ell}+\frac{\ell}{k}.
\]
A self-delimiting parameter header specifies $k$, $\ell$ and $d$ in $O(\log k)$ additional bits. Let $s$ be the decoding time of RAC. Noting that processing the input takes $O(n^2)$ time, the total running time is $O(n^2+s)$ while the coding overhead depends only on $s$ and $k$. Choosing $s=\Theta(\sqrt{k\log k})$ and translating the overhead through the binary entropy function yields the stated vanishing error gap for every polynomially bounded $m(n)\to\infty$.\footnote{For fixed $\varepsilon\in(0,1/2)$, the same construction also works for any sufficiently large constant advice length $m$ (depending on $\varepsilon$), with a constant gap $0<\delta<1/2-\varepsilon$. The assumption $m=\omega(1)$ ensures that our guaranteed gap tends to zero.} The construction thus connects the quantitative failure of amplification to the joint costs of compression, randomization, and efficient decoding.

\subsection{Open questions}
Our results leave open the optimal quantitative limits of size-preserving amplification.  Can separations be established for asymptotically smaller additive changes in error, and how does the answer depend on the initial error and the permitted decoding time? Conversely, is there any nontrivial amount of error reduction that is always possible at a fixed advice length, or can arbitrarily small improvements require additional advice? A converse of this kind would give a lower bound on possible separation gaps. Bounds on the redundancy of particular random access constructions would illuminate the limitations of the present approach; a general converse would require a statement about all algorithms and advice encodings.

\section{Preliminaries}

\subsection{Notation and the binary entropy function}\label{sec:entropy}

Throughout, $\log$ is the binary logarithm, $\ln$ the natural logarithm, $\N=\{1,2,\ldots\}$, $[k]=\{1,\ldots,k\}$, and $\F=\{0,1\}$ is the field with two elements; $\oplus$ denotes addition in $\F$ (bitwise XOR on strings) and $|w|$ the Hamming weight of $w\in\F^\ell$.
Vectors are column vectors, and we write $(u,v)\in\F^{d+q}$ for the concatenation of $u\in\F^d$ and $v\in\F^q$.
For $x\in\bit^n$ we write
\begin{align}\label{eq:idx}
    \idx{x} := 1+\sum_{i=1}^n x_i 2^{n-i}\in[2^n],
\end{align}
so that $x\mapsto\idx{x}$ is a bijection from $\bit^n$ onto $[2^n]$ (the rank of $x$ in lexicographic order).

The binary entropy function is $H(p)=-p\log p-(1-p)\log(1-p)$ for $p\in(0,1)$, and $H(0)=H(1)=0$.
It is continuous on $[0,1]$, strictly increasing on $[0,1/2]$ with $H(1/2)=1$, and $H^{-1}\colon[0,1]\to[0,1/2]$ denotes the inverse of its restriction to $[0,1/2]$.
We set
\begin{align}\label{eq:kappa}
    \kappa:=\sqrt{\frac{\ln 2}{2}}\approx 0.589,
    \qquad
    \bar\eta:=\frac{1}{4\kappa^2}=\frac{1}{2\ln 2}\approx 0.721,
\end{align}
and, for $\epsilon\in[0,1/2]$ and $\eta\geq 0$,
\begin{align}\label{eq:Phi}
    \lossfn(\epsilon,\eta):=\min\left\{\kappa\sqrt{\eta},\ \frac{\kappa^2\eta}{1/2-\epsilon}\right\},
\end{align}
where the second term is read as $+\infty$ when $\epsilon=1/2$.
The function $\lossfn$ measures by how much an error parameter may grow when its entropy budget grows by $\eta$ (\lem{entropy-perturbation}): it behaves like $\sqrt\eta$ in general and like $\eta/(1/2-\epsilon)$ when $\epsilon$ is bounded away from $1/2$.
We record the elementary properties that will be used.
\begin{enumerate}[label=(P\arabic*)]
    \item\label{itm:P1} $\lossfn(\epsilon,\eta)$ is nondecreasing in $\eta$, and $\lossfn(\epsilon,a\eta)\leq a\,\lossfn(\epsilon,\eta)$ for every $a\geq 1$ (because $\sqrt{a\eta}\leq a\sqrt\eta$).
    \item\label{itm:P2} $\lossfn(\epsilon,\eta)$ is nondecreasing in $\epsilon$.
    \item\label{itm:P3} If $\eta\leq\bar\eta$ then $\lossfn(\epsilon,\eta)\geq 2\kappa^2\eta$; if $\eta\geq\bar\eta$ then $\lossfn(\epsilon,\eta)\geq 1/2$.
    (Indeed $\kappa\sqrt\eta\geq 2\kappa^2\eta$ if and only if $\eta\leq\bar\eta$, and $\kappa^2\eta/(1/2-\epsilon)\geq 2\kappa^2\eta$ always; for $\eta\geq\bar\eta$ both terms are at least $\kappa\sqrt{\bar\eta}=1/2$ and $2\kappa^2\bar\eta=1/2$, respectively.)
    \item\label{itm:P4} $\lossfn(\epsilon,\eta)\leq\kappa\sqrt\eta$, and $\lossfn(\epsilon,\eta)\leq\kappa^2\eta/(1/2-\epsilon)$ when $\epsilon<1/2$.
\end{enumerate}

\begin{lemma}[Entropy perturbation]\label{lem:entropy-perturbation}
    For all $0\leq\epsilon_1\leq\epsilon_0\leq 1/2$,
    \begin{align}
        \epsilon_0-\epsilon_1\leq\lossfn\bigl(\epsilon_1,\,H(\epsilon_0)-H(\epsilon_1)\bigr).
    \end{align}
    Equivalently: if $0\leq\epsilon_1\leq\epsilon_0\leq1/2$ and $H(\epsilon_0)\leq H(\epsilon_1)+\theta$, then $\epsilon_0\leq\epsilon_1+\lossfn(\epsilon_1,\theta)$.
\end{lemma}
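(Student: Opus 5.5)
Since $\lossfn$ is a minimum of two terms, the inequality $\epsilon_0-\epsilon_1\leq\lossfn(\epsilon_1,\eta)$ with $\eta:=H(\epsilon_0)-H(\epsilon_1)\geq 0$ splits into two separate bounds:
\[
\epsilon_0-\epsilon_1\leq\kappa\sqrt{\eta}
\qquad\text{and}\qquad
\epsilon_0-\epsilon_1\leq\frac{\kappa^2\eta}{1/2-\epsilon_1}\quad(\epsilon_1<1/2).
\]
The equivalent reformulation then follows from \itm{P1}: monotonicity of $\lossfn$ in $\eta$ together with $H(\epsilon_0)-H(\epsilon_1)\leq\theta$ gives $\epsilon_0-\epsilon_1\leq\lossfn(\epsilon_1,\theta)$. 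Both bounds will come from the strong concavity of $H$: $H''(p)=-\frac{1}{\ln 2\,p(1-p)}\leq-\frac{4}{\ln 2}=-\frac{1}{\kappa^2}\cdot 2$, since $p(1-p)\leq 1/4$ and $\kappa^2=\ln2/2$.

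For the first bound, write $\Delta=\epsilon_0-\epsilon_1$ and use the chord/tangent comparison at $\epsilon_0$. Taylor expansion with the curvature bound gives
\[
H(\epsilon_1)\leq H(\epsilon_0)-H'(\epsilon_0)\Delta-\frac{2}{\ln 2}\Delta^2.
\]
Since $\epsilon_0\leq 1/2$, we have $H'(\epsilon_0)\geq 0$, so $\eta\geq \frac{2}{\ln2}\Delta^2=\Delta^2/\kappa^2$, i.e. $\Delta\leq\kappa\sqrt\eta$. Equivalently, one can integrate: $\eta=\int_{\epsilon_1}^{\epsilon_0}H'(p)\,dp$ and $H'(p)\geq \frac{4}{\ln 2}(1/2-p)\geq\frac{4}{\ln2}(\epsilon_0-p)$. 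This gives exactly the constant $\kappa$, which I would double-check.

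For the second bound, I would use the integral form together with a lower bound on $H'(p)$ linear in $(1/2-p)$:
\[
H'(p)=\log\frac{1-p}{p}\geq\frac{4}{\ln 2}\Bigl(\frac12-p\Bigr),
\]
valid on $[0,1/2]$ by concavity-type reasoning ($H'$ is convex on $(0,1/2]$, vanishes at $1/2$, and has slope $-4/\ln2$ there). Then
\[
\eta\geq\frac{4}{\ln2}\int_{\epsilon_1}^{\epsilon_0}\Bigl(\frac12-p\Bigr)dp=\frac{4}{\ln2}\,\Delta\,\Bigl(\frac12-\frac{\epsilon_0+\epsilon_1}{2}\Bigr)=\frac{2}{\ln2}\,\Delta\,\Bigl(1-\epsilon_0-\epsilon_1\Bigr).
\]
The difficulty is that the right-hand side involves $\epsilon_0$ and not only $\epsilon_1$, whereas the target is $\eta\geq\Delta(1/2-\epsilon_1)/\kappa^2=\frac{2}{\ln2}\Delta(1/2-\epsilon_1)$. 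This needs $1-\epsilon_0-\epsilon_1\geq 1/2-\epsilon_1$, i.e. $\epsilon_0\leq1/2$, which holds by hypothesis. So this step closes, and the constant bookkeeping ($\kappa^2=\ln 2/2$, the factor $2/\ln2=1/\kappa^2$) is the main thing to verify carefully.

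The main obstacle is the claimed inequality $\log\frac{1-p}{p}\geq\frac{4}{\ln2}(1/2-p)$. I would prove it by setting $g(p)=\ln\frac{1-p}{p}-4(1/2-p)$. Then $g(1/2)=0$ and $g'(p)=-\frac{1}{p(1-p)}+4\leq0$, so $g$ is nonincreasing and hence $g\geq0$ on $(0,1/2]$. The endpoint case $\epsilon_1=0$ is handled by continuity, since $H'$ is integrable there.
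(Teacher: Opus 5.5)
Your proof is correct and follows essentially the same route as the paper. Both bound $H'(p)\geq\frac{4}{\ln 2}(1/2-p)$: the paper integrates $H''\leq-4/\ln 2$ from $p$ to $1/2$, and your auxiliary function $g$ is the same computation. Both then integrate over $[\epsilon_1,\epsilon_0]$ to get $\eta\geq\Delta(1-\epsilon_0-\epsilon_1)/\kappa^2$ and use $\epsilon_0\leq 1/2$. The only cosmetic difference is that the paper reads off the $\kappa\sqrt\eta$ bound from this same inequality via $1-\epsilon_0-\epsilon_1\geq\Delta$, instead of a separate Taylor expansion at $\epsilon_0$; your integral alternative with $1/2-p\geq\epsilon_0-p$ is exactly that observation.
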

\begin{proof}
    Put $b_0:=1/2-\epsilon_0$, $b_1:=1/2-\epsilon_1$, $\Delta:=\epsilon_0-\epsilon_1=b_1-b_0\geq0$, and $\theta_0:=H(\epsilon_0)-H(\epsilon_1)$.
    For $x\in(0,1)$ we have $H''(x)=-\frac{1}{(\ln 2)\,x(1-x)}\leq-\frac{4}{\ln 2}$, since $x(1-x)\leq 1/4$. As $H'(1/2)=0$, for every $x\in(0,1/2]$,
    \begin{align}
        H'(x)=-\int_x^{1/2}H''(u)\,du\geq\frac{4}{\ln 2}\left(\frac{1}{2}-x\right).
    \end{align}
    Since $H$ is continuous on $[0,1/2]$ and continuously differentiable on $(0,1/2)$ with $H'\geq 0$ there,
    \begin{align}
        \theta_0=\int_{\epsilon_1}^{\epsilon_0}H'(x)\,dx
        \geq\frac{4}{\ln 2}\int_{\epsilon_1}^{\epsilon_0}\left(\frac{1}{2}-x\right)dx
        =\frac{2}{\ln 2}\bigl(b_1^2-b_0^2\bigr)
        =\frac{2}{\ln 2}\,\Delta\,(b_1+b_0)
        =\frac{\Delta\,(b_1+b_0)}{\kappa^2}.
    \end{align}
    Because $b_0\geq 0$ and $b_1\geq\Delta$, we get both $\Delta^2\leq\kappa^2\theta_0$, i.e.\ $\Delta\leq\kappa\sqrt{\theta_0}$, and (if $b_1>0$) $\Delta b_1\leq\kappa^2\theta_0$, i.e.\ $\Delta\leq\kappa^2\theta_0/(1/2-\epsilon_1)$. Hence $\Delta\leq\lossfn(\epsilon_1,\theta_0)$.
    The second formulation follows from the first by \ref{itm:P1}.
\end{proof}

The inequality $\Delta(b_1+b_0)\leq\kappa^2\theta_0$ in the proof, written as $\Delta^2+2b_0\Delta\leq\kappa^2\theta_0$, also yields $\Delta\leq\sqrt{b_0^2+\kappa^2\theta_0}-b_0\leq\kappa\theta_0/\bigl((1/2-\epsilon_0)+\sqrt{\theta_0}\bigr)$ (use $\sqrt{b_0^2+\kappa^2\theta_0}\geq\kappa\sqrt{\theta_0}$ and $\kappa<1$), which is the form used in an earlier draft; we shall only need the two bounds in \ref{itm:P4}.

The following consequence lets us move between nearby error parameters without losing more than a constant factor in the additive loss.

\begin{lemma}[Stability]\label{lem:stability}
    Let $0\leq\epsilon\leq\hat\epsilon\leq1/2$, let $\eta,\theta\geq 0$ with $H(\hat\epsilon)-H(\epsilon)\leq\theta$, and let $a\geq 1$. Then
    \begin{align}
        \min\Bigl\{\frac{1}{2},\ \hat\epsilon+a\,\lossfn(\hat\epsilon,\eta)\Bigr\}
        \leq
        \min\Bigl\{\frac{1}{2},\ \epsilon+(2a+1)\,\lossfn(\epsilon,\eta+\theta)\Bigr\}.
    \end{align}
\end{lemma}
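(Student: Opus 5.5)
The plan is to reduce to the case where the right-hand side is below $1/2$, and then bound the two pieces of $\hat\epsilon+a\,\lossfn(\hat\epsilon,\eta)$ separately. First, if $\epsilon+(2a+1)\lossfn(\epsilon,\eta+\theta)\geq 1/2$, the right-hand side equals $1/2$ and the inequality is trivial. This covers $\epsilon=1/2$, since then $\hat\epsilon=1/2$ and the left-hand side is $1/2$. So assume
\begin{align}
(2a+1)\,\lossfn(\epsilon,\eta+\theta)<\tfrac12-\epsilon .
\end{align}
In particular $\epsilon<1/2$. It then suffices to show $\hat\epsilon+a\lossfn(\hat\epsilon,\eta)\leq\epsilon+(2a+1)\lossfn(\epsilon,\eta+\theta)$.

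Step one bounds the shift in the error parameter. By \lem{entropy-perturbation}, in its second formulation with $\epsilon_1=\epsilon$ and $\epsilon_0=\hat\epsilon$, we get $\hat\epsilon-\epsilon\leq\lossfn(\epsilon,\theta)$. Property \ref{itm:P1} (monotonicity in $\eta$) then gives $\hat\epsilon-\epsilon\leq\lossfn(\epsilon,\eta+\theta)$, which contributes the "$+1$" in $2a+1$. Combined with the standing assumption and $a\geq1$, this also yields
\begin{align}
\hat\epsilon-\epsilon<\tfrac13\bigl(\tfrac12-\epsilon\bigr).
\end{align}
Hence $\tfrac12-\hat\epsilon\geq\tfrac23(\tfrac12-\epsilon)>0$. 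In particular $\hat\epsilon<1/2$, so the second term of $\lossfn(\hat\epsilon,\cdot)$ is finite.

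Step two compares $\lossfn(\hat\epsilon,\eta)$ with $\lossfn(\epsilon,\eta)$. The first term $\kappa\sqrt\eta$ in \eq{Phi} does not depend on the error parameter. For the second term, step one gives
\begin{align}
\frac{\kappa^2\eta}{1/2-\hat\epsilon}\leq\frac{3}{2}\cdot\frac{\kappa^2\eta}{1/2-\epsilon}.
\end{align}
Therefore $\lossfn(\hat\epsilon,\eta)\leq\min\{\kappa\sqrt\eta,\tfrac32\kappa^2\eta/(1/2-\epsilon)\}\leq\tfrac32\lossfn(\epsilon,\eta)\leq2\lossfn(\epsilon,\eta)$. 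Applying \ref{itm:P1} once more gives $2\lossfn(\epsilon,\eta)\leq2\lossfn(\epsilon,\eta+\theta)$.

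Adding the two steps yields $\hat\epsilon+a\lossfn(\hat\epsilon,\eta)\leq\epsilon+\lossfn(\epsilon,\eta+\theta)+2a\lossfn(\epsilon,\eta+\theta)$, which is the claim. The only delicate point is step one: the denominator $1/2-\hat\epsilon$ could be much smaller than $1/2-\epsilon$. The case split on whether the right-hand side saturates at $1/2$ is exactly what provides the margin $\hat\epsilon-\epsilon<(1/2-\epsilon)/3$ needed to control it. The factor $2a+1$ leaves enough slack for any constant below $3$ in this step.
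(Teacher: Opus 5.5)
Your proof is correct and follows essentially the paper's argument: you bound $\hat\epsilon-\epsilon\leq\lossfn(\epsilon,\theta)$ via \lem{entropy-perturbation}, control the ratio of the denominators $1/2-\hat\epsilon$ and $1/2-\epsilon$ to compare $\lossfn(\hat\epsilon,\eta)$ with $\lossfn(\epsilon,\eta)$, and finish with \ref{itm:P1}. The only difference is organizational. The paper splits on whether $1/2-\hat\epsilon\geq(1/2-\epsilon)/2$ and shows that the complementary case forces the right-hand side to saturate at $1/2$. You instead assume non-saturation up front and extract the slightly stronger margin $1/2-\hat\epsilon\geq\frac23(1/2-\epsilon)$ from $2a+1\geq3$.
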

\begin{proof}
    By \lem{entropy-perturbation}, $\hat\epsilon-\epsilon\leq\lossfn(\epsilon,\theta)$.
    We distinguish two cases.

    \emph{Case $1/2-\hat\epsilon\geq(1/2-\epsilon)/2$.}
    Comparing the two terms in \eq{Phi} gives $\lossfn(\hat\epsilon,\eta)\leq 2\lossfn(\epsilon,\eta)$ (the first term does not depend on the error parameter, and the second one at most doubles).
    Hence $\hat\epsilon+a\lossfn(\hat\epsilon,\eta)\leq\epsilon+\lossfn(\epsilon,\theta)+2a\lossfn(\epsilon,\eta)\leq\epsilon+(2a+1)\lossfn(\epsilon,\eta+\theta)$ by \ref{itm:P1}.

    \emph{Case $1/2-\hat\epsilon<(1/2-\epsilon)/2$.} Then $\hat\epsilon-\epsilon>(1/2-\epsilon)/2$, so
    $1/2<\epsilon+2(\hat\epsilon-\epsilon)\leq\epsilon+2\lossfn(\epsilon,\theta)\leq\epsilon+(2a+1)\lossfn(\epsilon,\eta+\theta)$,
    and the left-hand side of the claim is at most $1/2$.
\end{proof}

\subsection{Random access codes}\label{sec:rac}

\begin{definition}[Random access code, adapted from {\cite[Definition~4.1]{ANTV02}}]\label{dfn:rac}
    Let $k,m\geq1$ be integers and $\epsilon\in[0,1]$.
    A \emph{$k\xmapsto{\epsilon} m$ (classical) random access code} is a pair $(\Enc,\Dec)$, where $\Enc\colon\bit^k\times\Omega\to\bit^m$ for a finite set $\Omega$ of seeds equipped with a probability distribution, and $\Dec\colon[k]\times\bit^m\to\bit$ is a deterministic function, such that
    \begin{align}\label{eq:rac-condition}
        \Pr_{\omega\in\Omega}\bigl[\Dec(i,\Enc(z,\omega))=z_i\bigr]\geq1-\epsilon
        \qquad\text{for every } z\in\bit^k \text{ and every } i\in[k].
    \end{align}
    A \emph{$k\xmapsto{\epsilon}m$ quantum random access code} consists of a map $z\mapsto\rho_z$ from $\bit^k$ to density operators on $m$ qubits (mixed states are allowed) together with, for every $i\in[k]$, a two-outcome measurement $\{M_i^0,M_i^1\}$ on $m$ qubits, such that $\tr(M_i^{z_i}\rho_z)\geq1-\epsilon$ for every $z$ and $i$.
\end{definition}

Every classical code is a quantum code with the same parameters: take $\rho_z:=\Exp_\omega\ket{\Enc(z,\omega)}\!\bra{\Enc(z,\omega)}$ and $M_i^b:=\sum_{c\colon\Dec(i,c)=b}\ket{c}\!\bra{c}$.
Two trivial codes will be useful: if $m\geq k$, storing $z$ verbatim gives a $k\xmapsto{0}m$ code; and for every $k,m\geq1$ the \emph{trivial code} $\Enc(z,\omega):=\omega$ with $\omega$ uniform on $\bit^m$ and $\Dec(i,c):=c_1$ is a $k\xmapsto{1/2}m$ code, since its output is a uniformly random bit for every $z$ and $i$.

\begin{theorem}[Nayak's lower bound {\cite[Theorem~2.3, arXiv version]{Nayak99}}, see also {\cite[Theorem~5.1]{ANTV02}}]\label{thm:nayak}
    Let $\epsilon\in[0,1/2]$. Every $k\xmapsto{\epsilon}m$ quantum random access code satisfies $m\geq(1-H(\epsilon))\,k$.
    The bound holds for encodings into mixed states: the encodings in \cite[Definition~4.1]{ANTV02} may depend on internal randomness, which is the same as encoding into mixed states, and the proof of \cite[Theorem~5.1]{ANTV02} bounds the von Neumann entropy of the average encoding.
\end{theorem}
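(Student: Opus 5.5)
The plan is to follow the standard proof of Nayak's bound as presented in \cite[Theorem~5.1]{ANTV02}: compare the von Neumann entropy of the average encoding with the information it must carry about a uniformly random string. Let $Z=(Z_1,\dots,Z_k)$ be uniform on $\bit^k$, and form the classical--quantum state $\sigma_{ZQ}:=2^{-k}\sum_z\ket{z}\!\bra{z}\otimes\rho_z$. The register $Q$ has dimension $2^m$, so $I(Z:Q)\leq S(Q)\leq m$. It therefore suffices to show $I(Z:Q)\geq(1-H(\epsilon))k$. Mixed encodings $\rho_z$ require no special treatment, because only $S(Q)$ of the averaged state $2^{-k}\sum_z\rho_z$ enters the upper bound. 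A classical code with seeds is covered by the embedding described after \dfn{rac}.

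For the lower bound, use the chain rule
\[
I(Z:Q)=\sum_{i=1}^k I(Z_i:Q\mid Z_1\cdots Z_{i-1}).
\]
Fix $i$ and a prefix $y\in\bit^{i-1}$. Let $\rho_{y,b}$ be the average of $\rho_z$ over all $z$ with prefix $y$ and $z_i=b$. Conditioned on prefix $y$, the bit $Z_i$ is uniform, so the $i$-th term equals the average over $y$ of $S\bigl(\tfrac12(\rho_{y,0}+\rho_{y,1})\bigr)-\tfrac12 S(\rho_{y,0})-\tfrac12 S(\rho_{y,1})$. This is exactly the Holevo quantity of the ensemble $\{\tfrac12,\rho_{y,b}\}$. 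By linearity of the trace, the measurement $\{M_i^0,M_i^1\}$ identifies $b$ from $\rho_{y,b}$ with success probability at least $1-\epsilon$, since it does so for each $\rho_z$ being averaged.

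The key step is a one-bit inequality. If two states $\rho_0,\rho_1$ with equal priors can be distinguished with success probability $p\geq1-\epsilon\geq 1/2$, then their Holevo quantity is at least $1-H(p)$. This follows from Holevo's bound in the data-processing direction. Applying the measurement to $Q$ yields a classical bit $B$, and monotonicity of mutual information gives $\chi\geq I(Z_i:B)$. By Fano's inequality, $I(Z_i:B)=1-H(Z_i\mid B)\geq 1-H(\Pr[B\neq Z_i])\geq 1-H(\epsilon)$. The last step uses that $H$ is increasing on $[0,1/2]$ and that the error probability is at most $\epsilon\leq1/2$. Summing over $i$ gives $I(Z:Q)\geq(1-H(\epsilon))k$, and combining with $S(Q)\leq m$ proves the theorem.

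The main obstacle is making sure the conditioning is handled correctly: every conditional term must retain a per-bit decoding guarantee. This works because the guarantee in \dfn{rac} holds for every $z$, so it survives any averaging over the other coordinates. The entropy manipulations themselves are standard. The only care needed at the endpoints is the convention $H(0)=0$ and the monotonicity of $H$ on $[0,1/2]$.
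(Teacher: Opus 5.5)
Your proposal is correct and is essentially the argument the paper relies on: the paper gives no proof of its own but cites Nayak and \cite[Theorem~5.1]{ANTV02}, whose proof lower-bounds the von Neumann entropy of the average encoding $2^{-k}\sum_z\rho_z$ by $(1-H(\epsilon))k$, exactly as you do through $S(Q)\geq I(Z:Q)$. Your chain-rule decomposition repackages their prefix-tree induction: each term $I(Z_i:Q\mid Z_1\cdots Z_{i-1})$ is the average entropy gain $S\bigl(\tfrac12(\rho_{y,0}+\rho_{y,1})\bigr)-\tfrac12 S(\rho_{y,0})-\tfrac12 S(\rho_{y,1})$ at a depth-$(i-1)$ node, bounded below by $1-H(\epsilon)$ via Holevo and Fano. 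Your treatment of mixed encodings also matches the paper's remark.
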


Ambainis, Nayak, Ta-Shma and Vazirani showed that the bound is tight up to an additive logarithmic term, and that this is already achieved by \emph{classical} randomized encodings: for every constant $\epsilon\in(0,1/2)$ there is a $k\xmapsto{\epsilon}m$ classical random access code with $m=(1-H(\epsilon))k+O(\log k)$ \cite[Theorem~5.2]{ANTV02}.
Their construction takes a covering code $S\subseteq\bit^k$ of covering radius about $\epsilon k$ and size $2^{(1-H(\epsilon))k+O(\log k)}$, whose existence is guaranteed by a nonconstructive argument \cite{Cohen83,CHLL97}, and symmetrizes the decoding error over the bit positions by applying a random permutation and a random mask to the message before rounding it to the nearest codeword; a further probabilistic argument fixes a small list of permutation--mask pairs so that the description of the randomness fits into $O(\log k)$ bits.
Neither the covering code nor the list is given explicitly, so the resulting decoder is not known to be efficiently computable.
In \sec{construction-rac}, we give a construction with an explicit, efficient decoder, based on random Toeplitz codes and on a shared blockwise symmetrization by a cyclic shift and a mask.

\subsection{Computational model and advice classes}\label{sec:classes}

We fix the machine model for the deterministic computations with advice; the only nonstandard ingredient is random access to the advice, which is needed as soon as the running time is smaller than the advice length.

\begin{definition}[Deterministic machines with random access to advice]\label{dfn:model}
    A \emph{machine with random access to advice} is a multitape deterministic Turing machine with a read-only input tape, work tapes, a read-only \emph{advice tape} and a work tape designated as the \emph{index tape}.
    The advice tape holds a string $a\in\bit^m$ (one bit per cell, followed by blanks). In each step the machine may move the advice head by one cell, as for any other tape, or it may execute a \emph{jump}, which in one step moves the advice head to the cell whose index is written in binary on the index tape (to the first blank cell if the index exceeds $m$).
    The running time is the number of steps. We write $M(x,a)$ for the output of $M$ on input $x$ with advice $a$.
\end{definition}

A jump followed by $d$ ordinary moves reads $d$ consecutive advice bits in $O(d+w)$ steps, where $w$ is the length of the index; arithmetic on $w$-bit integers (addition, comparison, multiplication, division with remainder) takes $O(w^2)$ steps by the schoolbook methods.
These are the only facts about the model that we use.

\begin{definition}[Advice classes]\label{dfn:classes}
    Let $t,m\colon\N\to\N$ and $\epsilon\colon\N\to[0,1]$.
    \begin{enumerate}[label=(\alph*)]
        \item A language $L\subseteq\bit^*$ is in $\DTIME(t)_\epsilon/\r m$ if there are a machine $M$ with random access to advice and a family $(D_n)_{n\in\N}$ of probability distributions, $D_n$ on $\bit^{m(n)}$, such that for every $n$ and every $x\in\bit^n$:
        (i) $\Pr_{a\sim D_n}[M(x,a)\neq L(x)]\leq\epsilon(n)$, and
        (ii) $M(x,a)$ halts within $O(t(n))$ steps for every $a\in\supp(D_n)$, with a constant that does not depend on $n$, $x$ or $a$.
        We let $\P_\epsilon/\r m:=\bigcup_{c\in\N}\DTIME(n^c)_\epsilon/\r m$.
        \item A \emph{quantum algorithm} $Q$ (for advice length $m$) is a classical Turing machine which, on input $1^n$ together with the binary representation of $m(n)$, outputs a quantum circuit $Q_n$ over a fixed finite universal gate set that contains the NOT and Toffoli gates, acting on an $n$-qubit input register, an $m(n)$-qubit advice register and ancilla qubits initialized to $\ket0$; the output $Q_n(x,\rho)$ is the result of measuring a designated output qubit in the computational basis after running $Q_n$ on $\ket x\otimes\rho\otimes\ket{0\cdots0}$. The algorithm runs in time $t$ if the classical machine runs in time $O(t(n))$ and $Q_n$ has $O(t(n))$ gates. (Gates may act on arbitrary advice qubits, so random access to the advice is built in; the generating machine is given $m(n)$ because the circuit has to be laid out on $m(n)$ advice wires, and we do not assume that $m$ is computable.)
        A language $L$ is in $\BQTIME(t)_\epsilon/\q m$ if there are a quantum algorithm running in time $t$ and density operators $\rho_n$ on $m(n)$ qubits such that $\Pr[Q_n(x,\rho_n)\neq L(x)]\leq\epsilon(n)$ for every $n$ and $x\in\bit^n$.
        We let $\BQP_\epsilon/\q m:=\bigcup_{c\in\N}\BQTIME(n^c)_\epsilon/\q m$.
    \end{enumerate}
\end{definition}

Here the advice state may be mixed, and, as in the classical case, it depends only on the input length. Note that the time bound in (a)(ii) is required for the advice strings that actually occur; if $t$ is time-constructible, the machines constructed in \sec{separation} can be clocked so that the bound holds for every advice string.

\begin{lemma}[Basic facts]\label{lem:basic}
    Let $t,m\colon\N\to\N$ with $m\geq1$ and $\epsilon,\epsilon'\colon\N\to[0,1]$.
    \begin{enumerate}[label=(\alph*)]
        \item If $\epsilon\leq\epsilon'$ pointwise, then $\DTIME(t)_\epsilon/\r m\subseteq\DTIME(t)_{\epsilon'}/\r m$, and likewise $\BQTIME(t)_\epsilon/\q m\subseteq\BQTIME(t)_{\epsilon'}/\q m$.
        \item If $\epsilon\geq1/2$ pointwise, then every language is in $\DTIME(t)_\epsilon/\r m$.
        \item If $m(n)\leq n^{O(1)}$, then $\P_\epsilon/\r m\subseteq\BQP_\epsilon/\q m$.
    \end{enumerate}
\end{lemma}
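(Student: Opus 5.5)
The three parts follow directly from \dfn{classes}, and I will treat each in turn.

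For (a), the witness for membership in the smaller-error class also witnesses membership in the larger-error class. If $L\in\DTIME(t)_\epsilon/\r m$ via a machine $M$ and distributions $(D_n)$, then the error satisfies $\Pr_{a\sim D_n}[M(x,a)\neq L(x)]\leq\epsilon(n)\leq\epsilon'(n)$. The time bound (ii) makes no reference to the error parameter, so it carries over unchanged. The quantum case is the same argument applied to the pair $(Q,(\rho_n))$.

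For (b), fix any $L$. I take the machine $M$ that jumps to advice cell $1$ and outputs the bit it reads; this costs $O(1)$ steps, which is $O(t(n))$ because $t\geq1$ as a map into $\N$. I take $D_n$ to be uniform on $\bit^{m(n)}$, which is possible since $m\geq1$. Then $M(x,a)=a_1$ is a uniform bit, so $\Pr[M(x,a)\neq L(x)]=1/2\leq\epsilon(n)$ for every $x$. This is simply the trivial code of \sec{rac} with a one-bit decoder.

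For (c), I simulate a deterministic machine with random access to advice by a quantum circuit acting on the classical basis state of the advice. Given $M$ running in time $n^c$ and distributions $D_n$, I set $\rho_n:=\Exp_{a\sim D_n}\ket a\!\bra a$, a mixed state on $m(n)$ qubits. The quantum algorithm outputs a reversible circuit, built from NOT and Toffoli gates, that simulates $M$ for $O(n^c)$ steps on $\ket x\otimes\ket a$. By linearity, the output distribution on $\rho_n$ equals the distribution of $M(x,a)$ with $a\sim D_n$, so the error bound is preserved.

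The main obstacle is the efficiency of this simulation. A standard oblivious (Pippenger--Fischer) plus Bennett-style reversible simulation of a $T$-step multitape Turing machine yields a circuit of size polynomial in $T$. The difficulty is the jump step: reading advice cell $j$, with $j$ given in binary on the index tape, needs a multiplexer over all $m(n)$ advice wires. One jump therefore costs $O(m(n)\cdot w)$ gates, and the whole run costs $O(n^c\cdot m(n)\log m(n))$ gates. This is polynomial exactly because $m(n)\leq n^{O(1)}$, which is where that hypothesis is used.

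The simulated machine only needs to be correct on $a\in\supp(D_n)$, which is where the time bound is guaranteed; I run it for the fixed number $Cn^c$ of steps and output the designated tape cell. The generating classical machine receives $1^n$ and $m(n)$, so it can lay out the circuit in time polynomial in $n$. Hence $L\in\BQP_\epsilon/\q m$.
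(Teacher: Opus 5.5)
Your proposal is correct and follows the paper's proof essentially step for step. Part (a) is immediate. Part (b) uses uniform advice on $\bit^{m(n)}$ with a machine that outputs the first advice bit. Part (c) takes $\rho_n=\sum_a D_n(a)\ket{a}\!\bra{a}$ and a reversible NOT/Toffoli simulation of $M$ for $Cn^c$ steps, with each advice access implemented by scanning all $m(n)$ advice wires; this is polynomial-size exactly because $m(n)\leq n^{O(1)}$.
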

\begin{proof}
    (a) is immediate from the definitions.
    For (b), let $D_n$ be uniform on $\bit^{m(n)}$ and let $M$ output the first advice bit; then $M(x,a)$ is a uniformly random bit, so $\Pr[M(x,a)\neq L(x)]=1/2\leq\epsilon(n)$, and $M$ halts in $O(1)$ steps.
    For (c), let $L\in\DTIME(n^c)_\epsilon/\r m$ via $M$ and $(D_n)$, with $M(x,a)$ halting within $Cn^c$ steps for $a\in\supp(D_n)$, and put $\rho_n:=\sum_{a}D_n(a)\ket a\!\bra a$.
    Since $m(n)$ and $n^c$ are polynomial in $n$, the computation of $M$ on $(x,a)$ for $Cn^c$ steps (each jump being simulated by scanning the $m(n)$ advice bits) is a classical computation of polynomial size, and the generating machine, which knows $n$ and $m(n)$, can output in polynomial time a reversible circuit of NOT and Toffoli gates that performs it exactly on the input register, the advice register and ancillas, writing the output of $M$ on the output qubit (the standard reversible simulation of classical circuits by Toffoli gates); since these gates permute the computational basis, the advice register is never disturbed on basis states, and the circuit acts on $\rho_n$ as a mixture over $a\sim D_n$. Hence $\Pr[Q_n(x,\rho_n)\neq L(x)]=\Pr_{a\sim D_n}[M(x,a)\neq L(x)]\leq\epsilon(n)$.
\end{proof}

\section{Efficiently decodable random access codes}\label{sec:rac-construction}
We construct an efficiently decodable random access codes in this section. Our construction combines a second-moment covering argument inspired by Potukuchi and Zhang~\cite{PZ20} with Toeplitz-based universal hashing~\cite{MNT93, Hayashi11}. We adapt the random-access-code symmetrization of Ambainis et al. \cite{ANTV02} to a block construction in which the Toeplitz matrix, mask, and cyclic shift are shared across blocks, amortizing their description length while retaining efficient coordinate decoding.
\subsection{Random Toeplitz codes}\label{sec:toeplitz}

Fix integers $\ell\geq2$ and $d\in[\ell-1]$, and put $q:=\ell-d\geq1$.
For $t=(t_1,\ldots,t_{\ell-1})\in\F^{\ell-1}$ let $T=T(t)\in\F^{q\times d}$ be the Toeplitz matrix with entries
\begin{align}\label{eq:toeplitz}
    T_{ij}:=t_{i-j+d}\qquad(i\in[q],\ j\in[d]),
\end{align}
so that $T$ is constant along diagonals and is determined by its first row and first column; note that $i-j+d$ ranges over $[\ell-1]$.
Let $H_T:=[\,T\mid I_q\,]\in\F^{q\times\ell}$ and
\begin{align}
    C_T:=\{(u,Tu): u\in\F^d\}\subseteq\F^\ell .
\end{align}
We call $C_T$ a \emph{Toeplitz code}; it is a linear code with $2^d$ codewords, and a \emph{random Toeplitz code} is $C_{T(t)}$ for $t$ uniformly distributed on $\F^{\ell-1}$.
We write $\Pr_t$ and $\Exp_t$ for probabilities and expectations over this choice.

\begin{lemma}\label{lem:code-ht}
    For every $w\in\F^\ell$, $w\in C_T$ if and only if $H_Tw=0$.
\end{lemma}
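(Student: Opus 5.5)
The plan is to argue directly from the block structure of $H_T=[\,T\mid I_q\,]$. I will write an arbitrary $w\in\F^\ell$ as $w=(u,v)$ with $u\in\F^d$ and $v\in\F^q$. Then block multiplication gives
\begin{align*}
    H_Tw = Tu\oplus I_qv = Tu\oplus v,
\end{align*}
where addition is over $\F$.

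For the forward direction, suppose $w\in C_T$. Then $w=(u,Tu)$ for some $u\in\F^d$. Since the first $d$ coordinates of $w$ determine $u$, we have $v=Tu$, and so $H_Tw=Tu\oplus Tu=0$.

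For the converse, suppose $H_Tw=0$. Then $Tu\oplus v=0$. In characteristic $2$ every vector is its own additive inverse, so this says $v=Tu$. Hence $w=(u,Tu)\in C_T$.

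The only point needing care is the characteristic-two fact $a\oplus a=0$, equivalently $-a=a$, which turns $Tu\oplus v=0$ into $v=Tu$. I do not expect any real obstacle here. As a side remark, $H_T$ has full row rank $q$ because of the identity block, so its kernel has exactly $2^{\ell-q}=2^d$ elements. This matches $|C_T|=2^d$ and gives an independent sanity check, though the direct argument above does not need it.
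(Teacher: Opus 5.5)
Your proof is correct and essentially the paper's argument: split $w=(u,v)$, compute $H_Tw=Tu\oplus v$, and note that this vanishes exactly when $v=Tu$, i.e.\ when $w=(u,Tu)\in C_T$. The kernel-dimension sanity check is valid but, as you say, unnecessary.
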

\begin{proof}
    Write $w=(u,v)$ with $u\in\F^d$ and $v\in\F^q$. Then $H_Tw=Tu\oplus v$, which vanishes if and only if $v=Tu$, i.e., if and only if $w=(u,Tu)\in C_T$.
\end{proof}

\begin{lemma}\cite[Appendix~B]{Hayashi11}\label{lem:probability-codeword}
    For every fixed $w\in\F^\ell\setminus\{0\}$, $\Pr_t[H_{T(t)}w=0]\leq 2^{-q}$.
\end{lemma}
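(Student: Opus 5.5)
We must show that for any fixed nonzero $w=(u,v)$, $\Pr_t[Tu=v]\le 2^{-q}$, with $T=T(t)$ and $t$ uniform on $\F^{\ell-1}$. By \lem{code-ht}, $H_{T(t)}w=Tu\oplus v$, so the event is $Tu=v$. We split into two cases according to whether $u=0$.

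\emph{Case $u=0$.} Then $v\neq0$ because $w\neq0$. Since $Tu=0\neq v$ for every $t$, the probability is $0\le 2^{-q}$.

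\emph{Case $u\neq0$.} The key step is to show that the map $t\mapsto T(t)u$ from $\F^{\ell-1}$ to $\F^q$ is linear and surjective. It is linear because each entry of $T(t)$ is a coordinate of $t$. Let $j^*$ be the largest index with $u_{j^*}=1$. Then
\[
(Tu)_i=\sum_{j\le j^*}u_j\,t_{i-j+d}=t_{i-j^*+d}\oplus\sum_{j<j^*}u_j\,t_{i-j+d}.
\]
Every other term involves an index $i-j+d>i-j^*+d$. For $i\in[q]$, the pivot indices $i-j^*+d$ are distinct and lie in $[\ell-1]$. Moreover, the term $t_{i-j^*+d}$ does not appear in the equations for rows $i'<i$: those rows involve only indices of the form $i'-j+d\ge i'-j^*+d$, and indices equal to $i-j^*+d$ could arise there only with $j<j^*$. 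This needs checking, so order the rows the other way. In row $i$, the smallest index appearing is $i-j^*+d$. Every index in row $i'$ with $i'<i$ is at most $i'-1+d$. That bound is not smaller than $i-j^*+d$ in general, so I would argue triangularity directly instead.

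Process the rows in decreasing order of $i$. Fix the coordinates $t_s$ for $s>i-j^*+d$ arbitrarily. Row $i$ then determines $t_{i-j^*+d}$ uniquely, since it appears with coefficient $1$. Rows $i'>i$ have pivots larger than $i-j^*+d$, and their remaining terms involve only even larger indices, so they never constrain $t_{i-j^*+d}$. Thus for each target $v$ the system is solved by back-substitution, with exactly one free choice eliminated per row. Hence the preimage of each $v$ is an affine subspace of size $2^{\ell-1-q}$. Equivalently, the linear map has rank $q$, so $Tu$ is uniform on $\F^q$ and $\Pr_t[Tu=v]=2^{-q}$.

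The main obstacle is making the triangular structure rigorous: verifying that the pivot variable of each row appears in no row processed earlier. I would state this cleanly as follows. The coefficient matrix of the map $t\mapsto Tu$ has row $i$ supported on indices $\{i-j+d:u_j=1\}$, whose minimum is $i-j^*+d$ with coefficient $1$. Listing the rows in order of increasing $i$ gives strictly increasing minima, so the matrix is in echelon form with $q$ pivots and therefore has full row rank $q$.
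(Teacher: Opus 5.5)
Your proof is correct and is essentially the paper's argument: the paper also writes $Tu=M_ut$, takes $j^*=\max\{j:u_j=1\}$, and shows that the $q\times q$ submatrix of $M_u$ on the columns $i-j^*+d$ is unit upper triangular. That is exactly your observation that row $i$ has its leftmost nonzero entry at $i-j^*+d$, which puts the rows in echelon form and gives rank $q$. In a write-up, drop the exploratory middle paragraph and keep only the echelon-form argument: its tentative claim that $t_{i-j^*+d}$ does not occur in rows $i'<i$ is false in general, since it occurs whenever $u_{j^*-(i-i')}=1$.
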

\begin{proof}
    Write $w=(u,v)$ as above, so that $H_Tw=0$ if and only if $Tu=v$.
    If $u=0$ then $v\neq0$ and $Tu=0\neq v$ for every $t$, so the probability is $0$.
    Assume $u\neq0$ and let $j^*:=\max\{j\in[d]:u_j=1\}$.
    With the convention $u_b:=0$ for $b\notin[d]$, for every $i\in[q]$,
    \begin{align}
        (Tu)_i=\sum_{j=1}^{d}t_{i-j+d}\,u_j=\sum_{j'=1}^{\ell-1}u_{i-j'+d}\,t_{j'},
    \end{align}
    hence $Tu=M_ut$ for the matrix $M_u\in\F^{q\times(\ell-1)}$ with entries $(M_u)_{i,j'}:=u_{i-j'+d}$.
    Consider the $q$ columns $j'_1<\cdots<j'_q$ of $M_u$ given by $j'_i:=i-j^*+d\in[\ell-1]$, and the square submatrix $N\in\F^{q\times q}$ with $N_{i,i'}:=(M_u)_{i,j'_{i'}}=u_{\,i-i'+j^*}$.
    Its diagonal entries are $N_{i,i}=u_{j^*}=1$, and for $i>i'$ we have $i-i'+j^*>j^*$ and therefore $N_{i,i'}=0$.
    Thus $N$ is upper triangular with ones on the diagonal, hence invertible, and so $M_u$ has rank $q$.
    Consequently the linear map $t\mapsto M_ut$ is surjective onto $\F^q$, all of its fibres have the same size $2^{\ell-1-q}$, and $M_ut$ is uniformly distributed on $\F^q$ when $t$ is uniform. Therefore $\Pr_t[Tu=v]=2^{-q}$.
\end{proof}

For $0\leq r\leq\ell$ let $\Ball(\ell,r):=\{w\in\F^\ell:|w|\leq r\}$ be the Hamming ball of radius $r$ and $\Vol(\ell,r):=|\Ball(\ell,r)|=\sum_{j=0}^{r}\binom{\ell}{j}$ its volume.

\begin{lemma}\label{lem:bad-syndrome}
    Let $0\leq r\leq\ell$. If $t$ is uniform on $\F^{\ell-1}$ and $s$ is uniform on $\F^q$ and independent of $t$, then
    \begin{align}
        \Pr_{t,s}\bigl[\forall w\in\Ball(\ell,r):\ H_{T(t)}w\neq s\bigr]\leq\frac{2^q}{\Vol(\ell,r)}.
    \end{align}
\end{lemma}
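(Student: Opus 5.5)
Fix $t$ and $s$, and let $X := \sum_{w\in\Ball(\ell,r)} \mathbf{1}[H_{T(t)}w = s]$ count the words in the ball with syndrome $s$. The event in the statement is $\{X=0\}$. The plan is a second-moment (Paley--Zygmund/Chebyshev) argument, specifically the bound
\begin{align}
    \Pr[X=0]\leq \frac{\Var(X)}{\Exp[X]^2}=\frac{\Exp[X^2]-\Exp[X]^2}{\Exp[X]^2}.
\end{align}

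\textbf{First moment.} For each fixed $w$, the syndrome $H_{T(t)}w$ is a function of $t$ alone. Since $s$ is uniform and independent of $t$, we have $\Pr_{t,s}[H_{T(t)}w=s]=2^{-q}$ exactly. Hence $\Exp[X]=\Vol(\ell,r)\,2^{-q}$.

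\textbf{Second moment.} Expand
\begin{align}
    \Exp[X^2]=\sum_{w,w'}\Pr[H_Tw=s,\ H_Tw'=s].
\end{align}
Diagonal terms $w=w'$ contribute $\Exp[X]$. For $w\neq w'$, the joint event equals $\{H_T(w\oplus w')=0\}\cap\{H_Tw=s\}$. Condition on $t$: given $t$, the event $H_Tw=s$ has probability $2^{-q}$ over $s$. So the probability is $2^{-q}\Pr_t[H_{T(t)}(w\oplus w')=0]\leq 2^{-2q}$ by \lem{probability-codeword}, using $w\oplus w'\neq0$. Summing gives
\begin{align}
    \Exp[X^2]\leq \Exp[X]+\Vol(\ell,r)^2\,2^{-2q}=\Exp[X]+\Exp[X]^2,
\end{align}
so $\Var(X)\leq\Exp[X]$.

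\textbf{Conclusion.} Chebyshev then gives $\Pr[X=0]\leq \Var(X)/\Exp[X]^2\leq 1/\Exp[X]=2^q/\Vol(\ell,r)$. Since $0\in\Ball(\ell,r)$, we have $\Vol(\ell,r)\geq1$ and $\Exp[X]>0$, so the division is legitimate.

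The only delicate point is the off-diagonal joint probability. This is where the independent uniform $s$ is essential: it makes each single syndrome event have probability exactly $2^{-q}$ regardless of $t$. That in turn lets the pairwise bound factor into this uniform term times the codeword probability from \lem{probability-codeword}. Without $s$, the $u=0$ case in \lem{probability-codeword} would be unbalanced, and the pairwise bound would not factor.
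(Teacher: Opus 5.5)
Your proposal is correct and follows essentially the same route as the paper. Both arguments use a second-moment bound on the count of ball elements with syndrome $s$, with the off-diagonal terms bounded by $2^{-q}\Pr_t[H_{T(t)}(w\oplus w')=0]\leq 2^{-2q}$ via \lem{probability-codeword}, giving $\Var\leq\Exp$ and then Chebyshev. The paper phrases the bound through nonpositive covariances and $\Var(Z_w)\leq\Exp[Z_w]$, whereas you bound $\Exp[X^2]$ directly; this is the same computation.
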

\begin{proof}
    Write $T=T(t)$, and for $w\in\Ball(\ell,r)$ let $Z_w$ be the indicator of the event $H_Tw=s$; let $Z:=\sum_{w\in\Ball(\ell,r)}Z_w$ be the number of elements of the ball with syndrome $s$.
    Since $s$ is uniform and independent of $t$, $\Exp[Z_w]=\Pr[H_Tw=s]=2^{-q}$ for every $w$, so $\mu:=\Exp[Z]=\Vol(\ell,r)\,2^{-q}>0$.
    For $w\neq v$ in the ball,
    \begin{align}
        \Exp[Z_wZ_v]
        &=\Pr_{t,s}[H_Tw=s\text{ and }H_Tv=s]
        =2^{-q}\sum_{s'\in\F^q}\Pr_t[H_Tw=s'\text{ and }H_Tv=s']\nonumber\\
        &=2^{-q}\Pr_t[H_Tw=H_Tv]
        =2^{-q}\Pr_t[H_T(w\oplus v)=0]
        \leq2^{-2q},
    \end{align}
    by \lem{probability-codeword} applied to $w\oplus v\neq0$. Hence $\Cov(Z_w,Z_v)=\Exp[Z_wZ_v]-\Exp[Z_w]\Exp[Z_v]\leq0$, and since each $Z_w$ is an indicator, $\Var(Z_w)\leq\Exp[Z_w]$. Therefore
    \begin{align}
        \Var(Z)=\sum_{w}\Var(Z_w)+\sum_{w\neq v}\Cov(Z_w,Z_v)\leq\sum_w\Exp[Z_w]=\mu .
    \end{align}
    The event in the statement is $\{Z=0\}\subseteq\{|Z-\mu|\geq\mu\}$, so by Chebyshev's inequality $\Pr[Z=0]\leq\Var(Z)/\mu^2\leq1/\mu=2^q/\Vol(\ell,r)$.
\end{proof}

For $y\in\F^\ell$ and a code $C\subseteq\F^\ell$ let $d(y,C):=\min_{c\in C}|y\oplus c|$ be the distance from $y$ to $C$, and define the \emph{normalized average distance}
\begin{align}\label{eq:d-ct}
    D(C):=\Exp_{y}\left[\frac{d(y,C)}{\ell}\right],\qquad y\text{ uniform on }\F^\ell .
\end{align}

\begin{corollary}\label{cor:error}
    For every $0\leq r\leq\ell$,
    \begin{align}
        \Exp_t\bigl[D(C_{T(t)})\bigr]\leq\frac{r}{\ell}+\frac{2^q}{\Vol(\ell,r)} .
    \end{align}
\end{corollary}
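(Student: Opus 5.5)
The plan is to reduce the average distance to a syndrome-decoding event and then invoke \lem{bad-syndrome}. Fix $t$ and write $T=T(t)$. For $y\in\F^\ell$ we have $d(y,C_T)=\min\{|w|: w\oplus y\in C_T\}$. By \lem{code-ht}, $w\oplus y\in C_T$ if and only if $H_Tw=H_Ty$. Thus $d(y,C_T)\leq r$ holds whenever some $w\in\Ball(\ell,r)$ satisfies $H_Tw=H_Ty$. In every case $d(y,C_T)\leq\ell$, since $d(y,C_T)\leq|y\oplus 0|$ and $0\in C_T$. Combining the two cases gives the pointwise bound
\begin{align}
    \frac{d(y,C_T)}{\ell}\leq\frac{r}{\ell}+\mathbf{1}\bigl[\forall w\in\Ball(\ell,r):\ H_Tw\neq H_Ty\bigr].
\end{align}

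Next I take expectations over $y$ uniform on $\F^\ell$ and $t$ uniform on $\F^{\ell-1}$, chosen independently. This yields
\begin{align}
    \Exp_t[D(C_T)]\leq \frac{r}{\ell}+\Pr_{t,y}\bigl[\forall w\in\Ball(\ell,r):\ H_Tw\neq H_Ty\bigr].
\end{align}
It remains to identify the distribution of the syndrome $s:=H_Ty$. Write $y=(u,v)$, so that $H_Ty=Tu\oplus v$. Condition on $t$ and $u$; since $v$ is uniform on $\F^q$ and independent of them, $s$ is uniform on $\F^q$ and independent of $t$. The event depends on $y$ only through $s$, so the probability equals the quantity bounded in \lem{bad-syndrome}, which is at most $2^q/\Vol(\ell,r)$. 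This completes the argument.

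I expect no real obstacle here. The only point needing care is the independence and uniformity of the syndrome $H_Ty$ from $t$, and that is settled by the $I_q$ block of $H_T$. The trivial bound $d\leq\ell$ takes care of the bad event.
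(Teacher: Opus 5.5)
Your proposal is correct and follows essentially the same route as the paper. Both proofs use \lem{code-ht} to turn $d(y,C_T)>r$ into the bad-syndrome event, use the identity block of $H_T$ to see that $(t,H_Ty)$ is distributed like $(t,s)$, apply \lem{bad-syndrome}, and bound the distance by $\ell$ on the bad event. The only differences are cosmetic: you write $y=(u,v)$ where the paper invokes surjectivity of $y\mapsto H_Ty$, and you use a pointwise indicator bound where the paper uses $\Exp[d]\leq r+\ell\Pr[d>r]$.
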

\begin{proof}
    Let $y$ be uniform on $\F^\ell$ and independent of $t$, and write $T=T(t)$.
    For every fixed $t$ the map $y\mapsto H_Ty$ is linear and surjective onto $\F^q$ (as $H_T$ contains the identity block), so $H_Ty$ is uniform on $\F^q$ conditioned on any value of $t$; hence $(t,H_Ty)$ has the same distribution as $(t,s)$ with $s$ uniform and independent of $t$.
    By \lem{code-ht}, $\{y\oplus c:c\in C_T\}=\{w\in\F^\ell:H_Tw=H_Ty\}$, so $d(y,C_T)=\min\{|w|:H_Tw=H_Ty\}$ and $d(y,C_T)>r$ holds if and only if no $w\in\Ball(\ell,r)$ has $H_Tw=H_Ty$.
    By \lem{bad-syndrome}, $\Pr_{t,y}[d(y,C_T)>r]\leq2^q/\Vol(\ell,r)$.
    Since $d(y,C_T)\leq r$ on the complementary event and $d(y,C_T)\leq|y\oplus0|\leq\ell$ always,
    \begin{align}
        \Exp_{t,y}[d(y,C_T)]\leq r+\ell\cdot\Pr_{t,y}[d(y,C_T)>r]\leq r+\frac{2^q\,\ell}{\Vol(\ell,r)},
    \end{align}
    and dividing by $\ell$ gives the claim.
\end{proof}

\subsection{Construction from a random Toeplitz code}\label{sec:construction-rac}

We first record the standard estimate for the volume of a Hamming ball.

\begin{proposition}\label{prop:hamming-ball}
    For all integers $0\leq r\leq\ell$, with $p:=r/\ell$,
    \begin{align}
        \Vol(\ell,r)\geq\binom{\ell}{r}\geq\frac{2^{\ell H(p)}}{\ell+1}.
    \end{align}
\end{proposition}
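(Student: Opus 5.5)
The first inequality is immediate, since $\Vol(\ell,r)=\sum_{j=0}^r\binom{\ell}{j}$ contains the term $j=r$. For the second we use the standard ``largest term of the binomial distribution'' argument. We may assume $0<r<\ell$. If $r=0$ or $r=\ell$, then $p\in\{0,1\}$, so $H(p)=0$ and the claim reads $1\geq 1/(\ell+1)$, which is trivial.

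Consider the binomial expansion
\begin{align}
    1=(p+(1-p))^\ell=\sum_{j=0}^{\ell}\binom{\ell}{j}p^j(1-p)^{\ell-j},
\end{align}
and write $a_j:=\binom{\ell}{j}p^j(1-p)^{\ell-j}$. The key step is to show that $a_r$ is the largest of the $\ell+1$ terms. For this we compare consecutive terms:
\begin{align}
    \frac{a_{j+1}}{a_j}=\frac{\ell-j}{j+1}\cdot\frac{p}{1-p}=\frac{(\ell-j)\,r}{(j+1)(\ell-r)}.
\end{align}
This ratio is at least $1$ exactly when $(\ell-j)r\geq(j+1)(\ell-r)$, i.e.\ when $j\leq r-1+r/\ell$, hence (as $j$ is an integer and $0<r/\ell<1$) when $j\leq r-1$. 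So the sequence $(a_j)$ is nondecreasing up to $j=r$ and nonincreasing afterwards, and $a_r=\max_j a_j$. Since the $\ell+1$ terms sum to $1$, we get $a_r\geq 1/(\ell+1)$.

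Finally, $p^r(1-p)^{\ell-r}=2^{\ell(p\log p+(1-p)\log(1-p))}=2^{-\ell H(p)}$ because $r=p\ell$. Hence
\begin{align}
    \binom{\ell}{r}=a_r\,2^{\ell H(p)}\geq\frac{2^{\ell H(p)}}{\ell+1}.
\end{align}
The only point needing care is the monotonicity comparison of consecutive terms; the remaining steps are direct algebra.
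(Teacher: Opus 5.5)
Your proposal is correct and follows essentially the same argument as the paper. Both proofs dispose of $r\in\{0,\ell\}$ directly, show that $a_r$ is the largest term of the binomial expansion of $(p+(1-p))^\ell$ via the ratio $a_{j+1}/a_j$, and then use $p^r(1-p)^{\ell-r}=2^{-\ell H(p)}$ to conclude.
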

\begin{proof}
    If $r\in\{0,\ell\}$ then $H(p)=0$ and the claim is trivial. Otherwise $p\in(0,1)$, and by the binomial theorem $1=\sum_{j=0}^{\ell}a_j$ with $a_j:=\binom{\ell}{j}p^j(1-p)^{\ell-j}$. For $0\leq j<\ell$,
    \begin{align}
        \frac{a_{j+1}}{a_j}=\frac{(\ell-j)\,p}{(j+1)(1-p)}=\frac{(\ell-j)\,r}{(j+1)(\ell-r)},
    \end{align}
    which is at least $1$ if and only if $(\ell-j)r\geq(j+1)(\ell-r)$, i.e., if and only if $j\leq r-1+r/\ell$. As $0<r/\ell<1$, the sequence $(a_j)$ is nondecreasing for $j\leq r$ and decreasing afterwards, so $a_r$ is the largest of the $\ell+1$ summands and $a_r\geq1/(\ell+1)$.
    Finally $p^r(1-p)^{\ell-r}=2^{r\log p+(\ell-r)\log(1-p)}=2^{-\ell H(p)}$, so $\binom{\ell}{r}=a_r2^{\ell H(p)}\geq2^{\ell H(p)}/(\ell+1)$.
\end{proof}

\paragraph{Cyclic shifts.}
For $p\in[\ell]$ and $\tau\in\mathbb{Z}_\ell=\{0,\ldots,\ell-1\}$ we use the cyclic index arithmetic
\begin{align}\label{eq:shift}
    p\boxplus\tau:=\bigl((p-1+\tau)\bmod\ell\bigr)+1,\qquad
    p\boxminus\tau:=\bigl((p-1-\tau)\bmod\ell\bigr)+1,
\end{align}
so that $(p\boxminus\tau)\boxplus\tau=p$, and for every fixed $p$ the map $\tau\mapsto p\boxminus\tau$ is a bijection of $\mathbb{Z}_\ell$ onto $[\ell]$.
The cyclic shift $S_\tau\colon\F^\ell\to\F^\ell$ is defined by $(S_\tau c)_p:=c_{p\boxplus\tau}$.

\paragraph{Nearest codewords and error patterns.}
For $y\in\F^\ell$ let $\nc_T(y)\in\F^d$ be the lexicographically first $u\in\F^d$ minimizing $|(u,Tu)\oplus y|$, so that $(\nc_T(y),T\nc_T(y))$ is a nearest codeword of $C_T$ to $y$, and let
\begin{align}\label{eq:errp}
    \errp_T(y):=y\oplus\bigl(\nc_T(y),\,T\nc_T(y)\bigr)\in\F^\ell,\qquad\text{so that}\qquad|\errp_T(y)|=d(y,C_T).
\end{align}
Both $\nc_T$ and $\errp_T$ are deterministic functions of $(t,y)$.

\paragraph{Construction.}
The code is described in \prot{toeplitz-rac}.
The message $z\in\F^{B\ell}$ is cut into $B$ blocks of length $\ell$. The encoder draws one random Toeplitz matrix, one cyclic shift and one mask, which are shared by all blocks; it shifts and masks each block, rounds the result to a nearest codeword $(u_j,Tu_j)$ of the Toeplitz code and stores only the systematic part $u_j$, since $Tu_j$ is determined by $t$ and $u_j$.
Finding nearest codewords may be expensive, but this happens only during encoding, which in our application is the preparation of the advice and need not be efficient.
The decoder reconstructs the single codeword coordinate it needs from $t$ and $u_j$, undoes the mask, and never touches the other blocks.
The header $(t,\tau,\sigma)$ has $h=2\ell-1+\ceil{\log\ell}$ bits and is shared by all $B$ blocks, so its cost is amortized.

\begin{boxed}[htbp]
    \textbf{Parameters}: block length $\ell\geq2$; message length per block $d\in[\ell-1]$, with $q:=\ell-d$; number of blocks $B\geq1$. The header length is $h:=2\ell-1+\ceil{\log\ell}$ and the codeword length is $m':=h+Bd$.

    \textbf{Seed}: $(t,\tau,\sigma)$ with $t\in\F^{\ell-1}$, $\tau\in\mathbb{Z}_\ell$ and $\sigma\in\F^\ell$, independent and uniformly distributed. We write $T=T(t)$ for the Toeplitz matrix \eq{toeplitz}, $S_\tau$ for the cyclic shift \eq{shift}, and $\nc_T$ for the nearest-codeword map \eq{errp}.

    \textbf{Encoder} $\Enc(z;t,\tau,\sigma)$, on input $z\in\F^{B\ell}$:
    \begin{enumerate}[label=(E\arabic*)]
        \item Partition $z=(z_1,\ldots,z_B)$ with $z_j\in\F^\ell$.
        \item For each $j\in[B]$ let $y_j:=S_\tau(z_j)\oplus\sigma$ and $u_j:=\nc_T(y_j)\in\F^d$.
        \item Output $c:=(t,\tau,\sigma,u_1,\ldots,u_B)\in\F^{m'}$, where $\tau$ is written in binary with $\ceil{\log\ell}$ bits. Thus $t$ occupies positions $1,\ldots,\ell-1$ of $c$, $\tau$ the next $\ceil{\log\ell}$ positions, $\sigma$ the positions $\ell+\ceil{\log\ell},\ldots,h$, and $u_j$ the positions $h+(j-1)d+1,\ldots,h+jd$.
    \end{enumerate}

    \textbf{Decoder} $\Dec(a,c)$, on input an index $a\in[B\ell]$ and $c\in\F^{m'}$:
    \begin{enumerate}[label=(D\arabic*)]
        \item Let $j:=\floor{(a-1)/\ell}+1$ and $i:=((a-1)\bmod\ell)+1$, so that $z_a=(z_j)_i$.
        \item Read $\tau$ from $c$ and let $p:=i\boxminus\tau$.
        \item Read $\sigma_p$ (position $\ell+\ceil{\log\ell}-1+p$ of $c$) and the block $u_j$.
        \item If $p\leq d$, output $(u_j)_p\oplus\sigma_p$. Otherwise read $t_{p-d},\ldots,t_{p-1}$ from $c$ and output $\bigl(\sum_{y=1}^{d}t_{p-y}(u_j)_y\bigr)\oplus\sigma_p$, the sum being taken in $\F$.
    \end{enumerate}
    \caption{A random access code from a random Toeplitz code.}
    \label{prot:toeplitz-rac}
\end{boxed}

\begin{lemma}[Symmetrization]\label{lem:symmetrization}
    Fix parameters $(\ell,d,B)$ as in \prot{toeplitz-rac}. For every $z\in\F^{B\ell}$ and every $a\in[B\ell]$,
    \begin{align}\label{eq:exact-error}
        \Pr_{t,\tau,\sigma}\bigl[\Dec(a,\Enc(z;t,\tau,\sigma))\neq z_a\bigr]=\Exp_t\bigl[D(C_{T(t)})\bigr]=:\epsilon^*(\ell,d).
    \end{align}
    In particular, \prot{toeplitz-rac} is a $B\ell\xmapsto{\epsilon^*(\ell,d)}m'$ random access code, and $\epsilon^*(\ell,d)$ does not depend on $B$.
\end{lemma}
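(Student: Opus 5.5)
The plan is to first show that the decoder outputs exactly the $i$-th coordinate of the shifted-and-unmasked nearest codeword. Then I would use the uniform mask and the uniform shift to turn the error probability into an average over uniform $y$ and over all positions.

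\emph{Step 1 (correctness of reconstruction).} Fix $z,a$ and let $j,i$ be as in (D1). Put $y_j=S_\tau(z_j)\oplus\sigma$, $u_j=\nc_T(y_j)$ and $w:=(u_j,Tu_j)$. I will check that the decoder outputs $w_p\oplus\sigma_p$ with $p=i\boxminus\tau$.

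For $p\le d$ this is immediate. For $p>d$ the row index is $p-d\in[q]$, and by \eq{toeplitz}, $(Tu_j)_{p-d}=\sum_{y=1}^d t_{p-d-y+d}(u_j)_y=\sum_y t_{p-y}(u_j)_y$, which is what (D4) computes.

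On the other side, $(y_j)_p=(z_j)_{p\boxplus\tau}\oplus\sigma_p=(z_j)_i\oplus\sigma_p$, since $(i\boxminus\tau)\boxplus\tau=i$. Hence the output differs from $z_a$ exactly when $w_p\neq(y_j)_p$, that is, when $\errp_T(y_j)_p=1$, using \eq{errp}. So the failure probability equals $\Pr_{t,\tau,\sigma}[\errp_{T}(S_\tau z_j\oplus\sigma)_{i\boxminus\tau}=1]$.

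\emph{Step 2 (symmetrization).} Condition on $t$ and $\tau$. Because $\sigma$ is uniform and independent, $y:=S_\tau z_j\oplus\sigma$ is uniform on $\F^\ell$. Therefore $(t,\tau,y)$ are independent and uniform, and $z_j$ disappears from the expression.

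Since $\errp_T(y)$ depends only on $(t,y)$, the probability becomes $\Exp_{t,y}\bigl[\frac1\ell\sum_{\tau\in\mathbb{Z}_\ell}\errp_T(y)_{i\boxminus\tau}\bigr]$. The map $\tau\mapsto i\boxminus\tau$ is a bijection onto $[\ell]$, so the inner sum equals $|\errp_T(y)|/\ell=d(y,C_T)/\ell$. Taking the expectation gives $\Exp_t[D(C_{T(t)})]$, as claimed in \eq{exact-error}.

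This quantity depends only on $\ell$ and $d$, not on $z$, $a$ or $B$. Hence \eq{rac-condition} holds with $\epsilon=\epsilon^*(\ell,d)$ for every $z$ and $a$, which proves the random access code claim. The codeword length $m'$ is immediate from (E3).

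The main point of care is Step 2. The whole argument rests on the shift and the mask being shared across blocks while the guarantee is required only per query. One must check that conditioning on $\tau$ before averaging over $\sigma$ is legitimate, which holds because the three seeds are independent. One must also check that $\nc_T$ being deterministic and lexicographic makes $\errp_T(y)$ a function of $(t,y)$ alone. Finally, the index bookkeeping in Step 1 (positions of $\sigma_p$ and of $t_{p-y}\in[\ell-1]$) needs routine verification.
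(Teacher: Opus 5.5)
Your proposal is correct and follows essentially the same route as the paper: you first show that the decoder outputs $z_a\oplus(\errp_T(y_j))_p$, then use the uniform mask to make $y_j$ uniform and independent of $(t,\tau)$, and the uniform shift to average the error over all $\ell$ positions, which gives $D(C_T)$. The only difference is cosmetic: you sum over $\tau$ inside the joint expectation, whereas the paper first defines the per-position error probability $\pi_T(p')$ and then averages it over the uniformly distributed position $p=i\boxminus\tau$.
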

\begin{proof}
    Fix $z$ and $a$, and let $j$, $i$ be as in (D1), so that $z_a=(z_j)_i$. Fix also the seed $(t,\tau,\sigma)$, write $T=T(t)$, $y_j=S_\tau(z_j)\oplus\sigma$, $u_j=\nc_T(y_j)$, $c_j:=(u_j,Tu_j)=y_j\oplus\errp_T(y_j)$ and $p=i\boxminus\tau$.
    The decoder computes $(c_j)_p\oplus\sigma_p$: indeed, for $p\leq d$ we have $(c_j)_p=(u_j)_p$, and for $p>d$ we have $(c_j)_p=(Tu_j)_{p-d}=\sum_{y=1}^{d}T_{p-d,y}(u_j)_y=\sum_{y=1}^{d}t_{p-y}(u_j)_y$ by \eq{toeplitz}; note that $p-y\in[p-d,p-1]\subseteq[\ell-1]$.
    Since $p\boxplus\tau=i$, the definition of $S_\tau$ gives $(y_j)_p=(z_j)_{p\boxplus\tau}\oplus\sigma_p=(z_j)_i\oplus\sigma_p$, and therefore
    \begin{align}
        \Dec(a,\Enc(z;t,\tau,\sigma))=(c_j)_p\oplus\sigma_p=(y_j)_p\oplus(\errp_T(y_j))_p\oplus\sigma_p=z_a\oplus(\errp_T(y_j))_p .
    \end{align}
    Thus the decoder errs exactly when $(\errp_T(y_j))_p=1$.

    Now let the seed be random and condition on $t$ and $\tau$. The mask $\sigma$ is uniform on $\F^\ell$, so $y_j=S_\tau(z_j)\oplus\sigma$ is uniform on $\F^\ell$ as well (it is a translate of $\sigma$ by a fixed vector). Hence, for every fixed $t$, $\tau$ and every position $p'\in[\ell]$,
    \begin{align}
        \Pr_\sigma\bigl[(\errp_T(y_j))_{p'}=1\bigr]=\Pr_{y}\bigl[(\errp_T(y))_{p'}=1\bigr]=:\pi_T(p'),\qquad y\text{ uniform on }\F^\ell,
    \end{align}
    a quantity that depends neither on $z$ nor on $\tau$.
    Next let $\tau$ be uniform on $\mathbb{Z}_\ell$ and independent of $\sigma$. The position $p=i\boxminus\tau$ is then uniform on $[\ell]$, so
    \begin{align}
        \Pr_{\tau,\sigma}\bigl[\text{decoder errs}\mid t\bigr]=\frac{1}{\ell}\sum_{p'=1}^{\ell}\pi_T(p')=\frac{1}{\ell}\,\Exp_{y}\bigl[|\errp_T(y)|\bigr]=\Exp_y\left[\frac{d(y,C_T)}{\ell}\right]=D(C_T),
    \end{align}
    using \eq{errp} and \eq{d-ct}. Averaging over $t$ gives \eq{exact-error}, and the remaining claims follow from \dfn{rac}.
\end{proof}

\begin{lemma}\label{lem:encoding-length}
    Let $\ell\geq2$, $d\in[\ell-1]$, $B\geq1$, and let $q=\ell-d$, $h$ and $m'=h+Bd$ be as in \prot{toeplitz-rac}.
    \begin{enumerate}[label=(\alph*)]
        \item\label{itm:el-a} If $\epsilon\in[2/\ell,1/2]$ satisfies
        \begin{align}\label{eq:d-bound}
            \frac{d}{\ell}\geq1-H\!\left(\epsilon-\frac{2}{\ell}\right)+\frac{3\log\ell}{\ell},
        \end{align}
        then $\epsilon^*(\ell,d)\leq\epsilon$; that is, \prot{toeplitz-rac} is a $B\ell\xmapsto{\epsilon}m'$ random access code.
        \item\label{itm:el-b} There is an absolute constant $\cdec\geq1$ such that the decoder $\Dec$ can be implemented on a machine with random access to the codeword (\dfn{model}, with $c$ on the advice tape) that halts within $\cdec\,(d+w^2)$ steps, where $w\geq\max\{\log(B\ell),\log m'\}$ is an upper bound on the bit length of the index $a$, of the positions in $c$, and of the parameters $\ell$ and $d$. This holds both when $\ell$ and $d$ are hard-wired into the machine and when they are supplied to it as $w$-bit integers.
    \end{enumerate}
\end{lemma}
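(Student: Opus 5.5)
For part \ref{itm:el-a}, I would combine \lem{symmetrization} with \cor{error}. By \lem{symmetrization}, $\epsilon^*(\ell,d)=\Exp_t[D(C_{T(t)})]$, and \cor{error} bounds this by $r/\ell+2^q/\Vol(\ell,r)$ for any radius $0\leq r\leq\ell$. I would choose $r:=\floor{(\epsilon-2/\ell)\ell}$, so that $r\geq 0$ (because $\epsilon\geq 2/\ell$) and $r/\ell\leq\epsilon-2/\ell$.

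The only delicate point is that \prop{hamming-ball} gives the entropy at $p=r/\ell$ rather than at $\epsilon-2/\ell$. Rounding down loses at most $1/\ell$ in the argument of $H$. To absorb this I would use the slack in the hypothesis: either take $r:=\floor{(\epsilon-1/\ell)\ell}\geq(\epsilon-2/\ell)\ell$, or keep the first choice and pay one extra $1/\ell$ in the first term. I plan to take $r:=\ceil{(\epsilon-2/\ell)\ell}$. Then $r/\ell\in[\epsilon-2/\ell,\ \epsilon-1/\ell]$, and $r\leq\ell/2$ because $\epsilon\leq1/2$. Since $H$ is increasing on $[0,1/2]$, this gives $H(r/\ell)\geq H(\epsilon-2/\ell)$, and \prop{hamming-ball} yields
\[
\Vol(\ell,r)\geq 2^{\ell H(\epsilon-2/\ell)}/(\ell+1).
\]
Next, \eq{d-bound} gives
\[
q=\ell-d\leq \ell H(\epsilon-2/\ell)-3\log\ell,
\]
so
\[
2^q/\Vol(\ell,r)\leq(\ell+1)\ell^{-3}\leq 1/\ell
\]
for $\ell\geq2$. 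Check: $(\ell+1)/\ell^3\leq 1/\ell$ is equivalent to $\ell+1\leq\ell^2$, which holds for $\ell\geq2$. Hence
\[
\epsilon^*\leq r/\ell+1/\ell\leq\epsilon.
\]
I should also check that \eq{d-bound} forces $d\leq\ell-1$ consistently, but $d$ is given in $[\ell-1]$ anyway.

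For part \ref{itm:el-b}, I would implement (D1)--(D4) directly in the model of \dfn{model}:
\begin{itemize}
\item Computing $j$ and $i$ from $a$ takes one division with remainder on $w$-bit integers, which costs $O(w^2)$.
\item Computing the positions of $\tau$, $\sigma_p$, the start of $u_j$ (namely $h+(j-1)d+1$) and $t_{p-d}$ involves $O(1)$ additions and multiplications of $w$-bit numbers, each $O(w^2)$. The header length $h$ requires $\ceil{\log\ell}$, which is computable in $O(w^2)$ by scanning the bit length of $\ell$.
\item Reading $\tau$ takes one jump plus $\ceil{\log\ell}\leq w$ moves, so $O(w)$ steps.
\item Computing $p=i\boxminus\tau$ takes a subtraction and a conditional addition of $\ell$, so $O(w^2)$.
\item Reading $\sigma_p$ takes one jump, $O(w)$.
\end{itemize}

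The main cost is step (D4). Reading $u_j$ takes one jump and $d$ moves, and I would copy it to a work tape. Reading $t_{p-d},\ldots,t_{p-1}$ takes one jump and $d$ consecutive moves. The sum $\sum_y t_{p-y}(u_j)_y$ pairs $t$ in reverse order with $u_j$, so I would copy the $t$-segment to a work tape and scan it backwards in lockstep with the copy of $u_j$, accumulating the XOR in the finite control. That takes $O(d)$ steps. Copying and storing segments of length $d$ each costs $O(d)$ steps, and every jump costs $O(w)$. Adding everything up gives $O(d+w^2)$, with a constant independent of all parameters.

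If $\ell$ and $d$ are supplied as $w$-bit integers rather than hard-wired, they appear only in the arithmetic above, so the bound is unchanged. The main obstacle I expect is this bookkeeping: making sure no step secretly scans $\ell$ or $m'$ cells, which the jumps avoid, and that the reversed inner product stays linear in $d$.
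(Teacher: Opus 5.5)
Your proof is correct and follows the paper's argument. For (a) you combine \lem{symmetrization}, \cor{error} and \prop{hamming-ball} with the radius $r=\lceil\epsilon\ell\rceil-2$; this coincides with the paper's $r=\lfloor\epsilon\ell\rfloor-1$ except when $\epsilon\ell$ is an integer, and either choice gives $\epsilon-2/\ell\leq r/\ell\leq\epsilon-1/\ell$. For (b) you do the same step-by-step cost accounting, the only cosmetic difference being that you copy the $t$-segment to a work tape and scan it backwards, whereas the paper moves the advice head leftwards from $t_{p-1}$; both take $O(d)$ steps.
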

\begin{proof}
    \ref{itm:el-a}
    Let $r:=\floor{\epsilon\ell}-1$. Since $\epsilon\ell\geq2$ we have $r\geq1$, and
    \begin{align}\label{eq:r-range}
        \epsilon-\frac{2}{\ell}<\frac{r}{\ell}\leq\epsilon-\frac{1}{\ell}\leq\frac{1}{2}-\frac{1}{\ell} .
    \end{align}
    By \lem{symmetrization}, \cor{error} and \prop{hamming-ball},
    \begin{align}\label{eq:err-chain}
        \epsilon^*(\ell,d)=\Exp_t[D(C_{T(t)})]\leq\frac{r}{\ell}+\frac{2^q}{\Vol(\ell,r)}\leq\frac{r}{\ell}+(\ell+1)\,2^{\,q-\ell H(r/\ell)} .
    \end{align}
    By \eq{d-bound}, $q=\ell-d\leq\ell H(\epsilon-2/\ell)-3\log\ell$. By \eq{r-range}, $0\leq\epsilon-2/\ell<r/\ell\leq1/2$, and $H$ is nondecreasing on $[0,1/2]$, so $H(\epsilon-2/\ell)\leq H(r/\ell)$ and hence $q-\ell H(r/\ell)\leq-3\log\ell$. Therefore
    \begin{align}
        (\ell+1)\,2^{\,q-\ell H(r/\ell)}\leq\frac{\ell+1}{\ell^3}\leq\frac{2}{\ell^2}\leq\frac{1}{\ell}\qquad(\ell\geq2),
    \end{align}
    and \eq{err-chain} together with $r/\ell\leq\epsilon-1/\ell$ gives $\epsilon^*(\ell,d)\leq\epsilon-1/\ell+1/\ell=\epsilon$.

    \ref{itm:el-b}
    All integers handled by the decoder ($a$, $j$, $i$, $p$, $\tau$, $\ell$, $d$, $h$ and positions in $c$) have at most $w$ bits, so each arithmetic operation on them costs $O(w^2)$ steps. We go through the steps of \prot{toeplitz-rac}.
    (D1) computes a division with remainder of $a-1$ by $\ell$: $O(w^2)$ steps.
    (D2) computes the position $\ell$, jumps there and reads the $\ceil{\log\ell}\leq w$ bits of $\tau$, then computes $p=((i-1-\tau)\bmod\ell)+1$: $O(w^2)$ steps.
    (D3) computes the position of $\sigma_p$, jumps there and reads it; then computes $h+(j-1)d+1$ (one multiplication), jumps there and copies the $d$ bits of $u_j$ onto a work tape by $d$ ordinary moves: $O(w^2+d)$ steps.
    (D4) in the case $p\leq d$ locates $(u_j)_p$ on the work tape in $O(d+w)$ steps (counting down from $p$). In the case $p>d$ it writes the position $p-1$ on the index tape, jumps there, and then moves the advice head one cell to the left $d-1$ times, reading $t_{p-1},t_{p-2},\ldots,t_{p-d}$, while the head of the work tape moves over $(u_j)_1,(u_j)_2,\ldots,(u_j)_d$; the parity of $\sum_y t_{p-y}(u_j)_y$ is accumulated in the finite control. This takes $O(w+d)$ steps.
    Altogether the decoder halts within $O(d+w^2)$ steps, and the constant does not depend on the parameters. If $\ell$ and $d$ are supplied as $w$-bit integers rather than hard-wired, the only difference is that $h$ is computed from $\ell$ in $O(w^2)$ steps.
\end{proof}

\subsection{Trading decoding time for error}\label{sec:tradeoff}

We now choose the parameters of \prot{toeplitz-rac} as functions of the message length $k$, the codeword length $m$ and a bound $t$ on the number of steps of the decoder, and bound the resulting error in terms of the information-theoretic optimum $H^{-1}(1-m/k)$ of \thm{nayak}.

\begin{theorem}\label{thm:dec-error-bound}
    There are absolute constants $c_0,c_2,c\geq1$ with the following property.
    Let $k,m,t$ be positive integers with $m\leq k$ and $t\geq c_0\log^2(2k)$, and put
    \begin{align}\label{eq:eps1-eta}
        \epsilon_1:=H^{-1}\!\left(1-\frac{m}{k}\right)\in[0,1/2],\qquad
        \eta:=\frac{\log t}{t}+\frac{t}{k}.
    \end{align}
    Then there is a $k\xmapsto{\epsilon}m$ random access code with
    \begin{align}\label{eq:rac-error-upper-bound}
        \epsilon\leq\min\Bigl\{\frac{1}{2},\ \epsilon_1+c\,\lossfn(\epsilon_1,\eta)\Bigr\}
        \leq\min\Bigl\{\frac{1}{2},\ \epsilon_1+c\kappa\sqrt\eta,\ \epsilon_1+\frac{c\kappa^2\,\eta}{1/2-\epsilon_1}\Bigr\}
    \end{align}
    whose decoder halts within $t$ steps on a machine with random access to the codeword (\dfn{model}).
    Moreover, the code is either the trivial code of \sec{rac}, or \prot{toeplitz-rac} with parameters
    \begin{align}\label{eq:parameters}
        \ell=\floor{t/c_2},\qquad B=\ceil{k/\ell},\qquad d=\floor{(m-h)/B},\qquad\text{where }2\leq\ell<k,
    \end{align}
    applied to the first $k$ of its $B\ell$ message bits (the remaining ones being set to $0$) and with the codeword padded by zeros to length $m$.
\end{theorem}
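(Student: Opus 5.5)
We will prove \thm{dec-error-bound} by a case distinction that either falls back to the trivial code or instantiates \prot{toeplitz-rac} with the parameters \eq{parameters}, and then converting the rate condition \eq{d-bound} into an error bound via \lem{entropy-perturbation} and \lem{stability}. The second inequality in \eq{rac-error-upper-bound} is immediate from \ref{itm:P4}, so only the first needs proof. The constants will be fixed in this order: $c_2$ first, so that the decoder time $\cdec(d+w^2)$ from \lem{encoding-length}\ref{itm:el-b} is at most $t$; then $c_0$, so that $\ell=\floor{t/c_2}\geq2$ and $w^2=O(\log^2(2k))\leq t/2$; and finally $c$.

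\textbf{Fallback cases.} If $\ell\geq k$, or if $\eta$ exceeds some absolute constant $\eta_0$, or if $\epsilon_1+c\lossfn(\epsilon_1,\eta)\geq1/2$, we use the trivial $k\xmapsto{1/2}m$ code. It decodes in $O(1)$ steps, and its error $1/2$ equals the minimum in the claim. Note that $\ell\geq k$ forces $t/k\gtrsim1/c_2$, hence $\eta\geq\eta'$ for some absolute constant $\eta'$. By \ref{itm:P3}, $\lossfn(\epsilon_1,\eta)$ is then bounded below by an absolute constant, so $c\lossfn\geq1/2$ once $c$ is large. So in the main case we may assume that $2\leq\ell<k$, that $\eta$ is small, and that the target
\[
  \epsilon:=\epsilon_1+c\,\lossfn(\epsilon_1,\eta)<1/2 .
\]

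\textbf{Main case.} Take $\ell,B,d$ as in \eq{parameters}. Since $d\leq\ell$, the decoder runs in time $\cdec(\ell+w^2)\leq t$. The code length is $h+Bd\leq m$, and we pad with zeros; also $d\leq\ell-1$ can be arranged. We then check \eq{d-bound}, which needs $d\geq1$. Using $B\leq k/\ell+1$, $h\leq3\ell$ and $m/k=1-H(\epsilon_1)$, we get
\[
  \frac{d}{\ell}\;\geq\;\frac{m-h}{B\ell}-\frac{1}{\ell}\;\geq\;\bigl(1-H(\epsilon_1)\bigr)\Bigl(1-\frac{\ell}{k}\Bigr)-\frac{3\ell}{k}-\frac1\ell .
\]
Since $\ell\asymp t$, this gives $d/\ell\geq1-H(\epsilon_1)-O(\eta)$. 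Therefore \eq{d-bound} holds provided
\[
  H(\epsilon-2/\ell)\;\geq\;H(\epsilon_1)+C_1\eta
\]
for an absolute constant $C_1$, because $3\log\ell/\ell=O(\log t/t)=O(\eta)$. Also $2/\ell=O(1/t)\leq O(\eta)$. It therefore suffices to show
\[
  H\bigl(\epsilon_1+c\lossfn(\epsilon_1,\eta)-C_2\eta\bigr)\;\geq\;H(\epsilon_1)+C_1\eta .
\]
The positivity of $d$ follows when $1-H(\epsilon)$ dominates $O(\eta)$. This holds because $\epsilon<1/2$ is bounded away from $1/2$ compared to $\eta$ in the main case, or else we reroute that subcase to the trivial code.

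\textbf{Main obstacle: the entropy gain.} The key step is this lower bound on the gain of entropy when moving from $\epsilon_1$ to $\epsilon_1+\Delta$ with $\Delta\approx c\lossfn(\epsilon_1,\eta)$. This is a reverse form of \lem{entropy-perturbation}. We argue from concavity: for $x\leq1/2$ we have $H'(x)\geq\frac{4}{\ln2}(1/2-x)$. Integrating from $\epsilon_1$ to $\epsilon_1+\Delta\leq1/2$ gives
\[
  H(\epsilon_1+\Delta)-H(\epsilon_1)\;\geq\;\frac{\Delta\,(2b_1-\Delta)}{\kappa^2},\qquad b_1:=\tfrac12-\epsilon_1 .
\]
If $\Delta\leq b_1$, this is at least $\Delta b_1/\kappa^2$. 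If $\Delta$ is comparable to $b_1$, we use the $\sqrt\eta$ branch instead: the gain is then at least of order $\Delta^2$. In both branches, choosing $\Delta=c'\lossfn(\epsilon_1,\eta)$ with $c'$ large gives gain $\geq C_1\eta$. The first branch uses the second term of \eq{Phi}; the second uses $\lossfn^2\geq\kappa^2\eta\cdot\min\{1,\dots\}$, read off from \eq{Phi}. Subtracting the additive $C_2\eta$ is harmless because $\eta\leq\lossfn(\epsilon_1,\eta)/(2\kappa^2)$ by \ref{itm:P3}. The delicate bookkeeping is making these constants uniform across both branches of the minimum and near $\epsilon_1\approx1/2$. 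For that we rely on the fallback when $\epsilon\geq1/2$, and if needed on \lem{stability} to absorb the shift by $2/\ell$.
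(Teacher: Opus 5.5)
Your proposal is correct and follows the paper's proof closely. It uses the same fallback to the trivial code when $\epsilon_1+c\,\lossfn(\epsilon_1,\eta)\geq 1/2$, the same parameters and decoder-time accounting, and the same rate estimate $d/\ell\geq 1-H(\epsilon_1)-O(\eta)$.

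The one difference is the direction of the entropy step. The paper sets $\epsilon_0:=H^{-1}(\min\{1,y\})$ from the realized $d$ and applies \lem{entropy-perturbation} forward, so that \eq{d-bound} holds with equality, and it proves $d\geq 1$ separately. Your ``reverse'' entropy-gain bound is exactly the contrapositive of that lemma, so you could cite it rather than re-derive it from concavity. In your direction, $d\geq1$ then comes for free, because the right-hand side of \eq{d-bound} is positive.
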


\begin{proof}
    Let $\cdec$ be the constant of \lem{encoding-length}\ref{itm:el-b}, and set
    \begin{align}\label{eq:constants}
        c_2:=2\cdec,\qquad c_0:=4c_2=8\cdec,\qquad c:=10c_2+\frac{2c_2}{\kappa^2}.
    \end{align}
    Note that $c\geq 4c_2\geq1$. Put $\beta:=\epsilon_1+c\,\lossfn(\epsilon_1,\eta)$.

    \emph{Step 0: the trivial case.}
    If $\beta\geq1/2$, the trivial code has error $1/2=\min\{1/2,\beta\}$, and its decoder (write the index $1$ on the index tape, jump, read, output) halts within an absolute constant number of steps, which we may assume to be at most $c_0$ (increase $\cdec$ if necessary); hence it halts within $t\geq c_0$ steps. So assume from now on that $\beta<1/2$. Then $c\lossfn(\epsilon_1,\eta)<1/2$, and \ref{itm:P3} (with $c\geq1$) forces
    \begin{align}\label{eq:eta-small}
        \eta<\bar\eta<1;\qquad\text{in particular}\qquad t<k.
    \end{align}

    \emph{Step 1: parameters.}
    Let $\ell:=\floor{t/c_2}$. As $t\geq c_0=4c_2$, we have $\ell\geq4$, $\ell\leq t/c_2\leq t/2$ and $\ell\geq t/c_2-1\geq t/(2c_2)$; by \eq{eta-small}, $\ell<k$.
    Let $B:=\ceil{k/\ell}$, so that $k\leq B\ell<k+\ell<2k$ and $B\leq k/\ell+1\leq2k/\ell$; let $h:=2\ell-1+\ceil{\log\ell}\leq3\ell$ and $d:=\floor{(m-h)/B}$.
    From $t/(2c_2)\leq\ell\leq t$ and $\log t\geq1$ we obtain
    \begin{align}\label{eq:ell-vs-t}
        \frac{\ell}{k}\leq\frac{t}{k},\qquad
        \frac{\log\ell}{\ell}\leq\frac{2c_2\log t}{t},\qquad
        \frac{1}{\ell}\leq\frac{2c_2}{t}\leq\frac{2c_2\log t}{t}.
    \end{align}

    \emph{Step 2: $1\leq d\leq\ell-1$.}
    Since $m\leq k\leq B\ell$ and $h\geq1$, we have $(m-h)/B<\ell$, hence $d\leq\ell-1$.
    Suppose that $d\leq0$, i.e., $m<h+B$. Then, by \eq{ell-vs-t},
    \begin{align}
        1-H(\epsilon_1)=\frac{m}{k}<\frac{3\ell+2k/\ell}{k}=\frac{3\ell}{k}+\frac{2}{\ell}\leq\frac{3t}{k}+\frac{4c_2\log t}{t}\leq4c_2\,\eta .
    \end{align}
    \lem{entropy-perturbation} with $\epsilon_0=1/2$ and \ref{itm:P1} give $1/2-\epsilon_1\leq\lossfn(\epsilon_1,1-H(\epsilon_1))\leq\lossfn(\epsilon_1,4c_2\eta)\leq4c_2\lossfn(\epsilon_1,\eta)\leq c\lossfn(\epsilon_1,\eta)$, i.e., $\beta\geq1/2$, contradicting our assumption. Hence $d\geq1$, and in particular $m\geq h+B$.

    \emph{Step 3: the code and its decoding time.}
    By \lem{symmetrization}, \prot{toeplitz-rac} with parameters $(\ell,d,B)$ is a $B\ell\xmapsto{\epsilon^*}m'$ random access code with $\epsilon^*:=\epsilon^*(\ell,d)$ and $m'=h+Bd\leq m$.
    Restricting the message to its first $k$ bits (the encoder sets the other $B\ell-k$ bits to $0$) and padding every codeword with $m-m'$ zeros yields a $k\xmapsto{\epsilon^*}m$ random access code; the decoder ignores the padding.
    Let $w:=\ceil{\log(2k)}$. Then $w\geq\log(B\ell)$ because $B\ell<2k$, $w\geq\log m'$ because $m'\leq m\leq k$, and $w$ bounds the bit length of $\ell,d<k$. Since $\log(2k)\geq1$, $w^2\leq(\log(2k)+1)^2\leq4\log^2(2k)\leq4t/c_0$. By \lem{encoding-length}\ref{itm:el-b} the decoder halts within
    \begin{align}
        \cdec\,(d+w^2)\leq\cdec\Bigl(\frac{t}{c_2}+\frac{4t}{c_0}\Bigr)=\frac{t}{2}+\frac{t}{2}=t
    \end{align}
    steps.

    \emph{Step 4: the error bound.}
    Let
    \begin{align}
        y:=1-\frac{d}{\ell}+\frac{3\log\ell}{\ell}>0,\qquad
        \epsilon_0:=H^{-1}\bigl(\min\{1,y\}\bigr)\in[0,1/2],\qquad
        \epsilon:=\epsilon_0+\frac{2}{\ell} .
    \end{align}
    We first bound $\epsilon_0$ in terms of $\epsilon_1$. From $d=\floor{(m-h)/B}$ we get $Bd\geq m-h-B$, and $m-h-B\geq0$ by Step~2. Using $B\ell<k+\ell$, $\frac{1}{k+\ell}=\frac{1}{k}\bigl(1-\frac{\ell}{k+\ell}\bigr)\geq\frac{1}{k}\bigl(1-\frac{\ell}{k}\bigr)$ and $(m-h-B)/k\leq1$,
    \begin{align}
        \frac{d}{\ell}\geq\frac{m-h-B}{B\ell}\geq\frac{m-h-B}{k+\ell}\geq\frac{m-h-B}{k}\Bigl(1-\frac{\ell}{k}\Bigr)\geq\frac{m}{k}-\frac{h+B+\ell}{k}\geq\frac{m}{k}-\frac{4\ell}{k}-\frac{2}{\ell},
    \end{align}
    where the last step uses $h\leq3\ell$ and $B\leq2k/\ell$. Consequently, by \eq{ell-vs-t},
    \begin{align}
        y\leq1-\frac{m}{k}+\frac{4\ell}{k}+\frac{2+3\log\ell}{\ell}
        \leq H(\epsilon_1)+\frac{4t}{k}+\frac{10c_2\log t}{t}
        \leq H(\epsilon_1)+10c_2\,\eta .
    \end{align}
    Hence $H(\epsilon_0)\leq\min\{1,y\}\leq H(\epsilon_1)+10c_2\eta$. If $\epsilon_0\leq\epsilon_1$ there is nothing to prove; otherwise \lem{entropy-perturbation} applies. In both cases, using \ref{itm:P1},
    \begin{align}\label{eq:eps0-bound}
        \epsilon_0\leq\epsilon_1+\lossfn(\epsilon_1,10c_2\eta)\leq\epsilon_1+10c_2\,\lossfn(\epsilon_1,\eta).
    \end{align}
    Next, by \eq{ell-vs-t}, \eq{eta-small} and \ref{itm:P3},
    \begin{align}\label{eq:two-over-ell}
        \frac{2}{\ell}\leq\frac{4c_2\log t}{t}\leq4c_2\,\eta\leq\frac{2c_2}{\kappa^2}\,\lossfn(\epsilon_1,\eta).
    \end{align}
    Adding \eq{eps0-bound} and \eq{two-over-ell} gives
    \begin{align}\label{eq:eps-final}
        \epsilon=\epsilon_0+\frac{2}{\ell}\leq\epsilon_1+\Bigl(10c_2+\frac{2c_2}{\kappa^2}\Bigr)\lossfn(\epsilon_1,\eta)=\beta<\frac{1}{2} .
    \end{align}
    In particular $\epsilon_0<1/2$, so $H(\epsilon_0)<1$, which forces $\min\{1,y\}=y$ and $H(\epsilon_0)=y$; that is,
    \begin{align}
        \frac{d}{\ell}=1-H\Bigl(\epsilon-\frac{2}{\ell}\Bigr)+\frac{3\log\ell}{\ell}.
    \end{align}
    As $2/\ell\leq\epsilon\leq1/2$, \lem{encoding-length}\ref{itm:el-a} yields $\epsilon^*\leq\epsilon\leq\beta$.
    Together with Step~3 this proves the first inequality in \eq{rac-error-upper-bound}; the second one is \ref{itm:P4}.
\end{proof}

\paragraph{Discussion.}
In \eq{rac-error-upper-bound} the first term, $H^{-1}(1-m/k)$, is a lower bound on the error of \emph{every} $k\xmapsto{\epsilon}m$ random access code, classical or quantum, by \thm{nayak}.
The second term is the additive loss caused by restricting the decoder to $t$ steps and by the shared header.
It is governed by $\eta=(\log t)/t+t/k$: the first summand decreases with the block length (which grows with $t$), the second accounts for the header, which grows with the block length while the number of blocks shrinks.
The two summands are balanced at $t_0:=\ceil{\sqrt{k\log k}}$: for $k\geq4$ one checks $(\log t_0)/t_0\leq2\sqrt{(\log k)/k}$ and $t_0/k\leq2\sqrt{(\log k)/k}$, so $\eta\leq4\sqrt{(\log k)/k}$ and the loss is $O((\log k/k)^{1/4})$ in general, and $O\bigl(\sqrt{(\log k)/k}\,/(1/2-\epsilon_1)\bigr)$ when $\epsilon_1$ is bounded away from $1/2$. Note that $t_0\geq c_0\log^2(2k)$ for all $k$ larger than a constant.
When the running time is more restricted, $t=o(\sqrt{k\log k})$, the term $(\log t)/t$ dominates and the loss is $O(\sqrt{(\log t)/t})$, respectively $O\bigl((\log t)/(t(1/2-\epsilon_1))\bigr)$.

\section{No size-preserving amplification}\label{sec:separation}

In this section we prove separations between advice classes with the same advice length and different error bounds.
Fix a function $k\colon\N\to\N$ with $k(n)\leq2^n$ for all $n$. For $R\colon\bit^*\to\bit$ define the language
\begin{align}\label{eq:separation-language-2}
    L^{(k)}_R:=\bigl\{x\in\bit^*:\ \idx{x}\leq k(|x|)\text{ and }R(x)=1\bigr\},
\end{align}
where $\idx{x}$ is the lexicographic rank \eq{idx}: on inputs of length $n$ the language agrees with $R$ on the first $k(n)$ strings of $\bit^n$ and is $0$ on the remaining ones.
For $n\in\N$ let $z_n(R)\in\bit^{k(n)}$ be the truth table of $R$ on these strings, $z_n(R)_j:=R(x)$ for the $x\in\bit^n$ with $\idx{x}=j$. Thus
\begin{align}\label{eq:L-vs-z}
    L^{(k)}_R(x)=z_n(R)_{\idx{x}}\qquad\text{whenever }|x|=n\text{ and }\idx{x}\leq k(n).
\end{align}
This family of languages allows a diagonalization argument: at a given input length, deciding $L^{(k)}_R$ from an advice state amounts to decoding single bits of $z_n(R)$ from that state.

\subsection{The lower bound}

\begin{theorem}\label{thm:soundness-2}
    Let $m,k\colon\N\to\N$ with $k(n)\leq2^n$, and let $\epsilon\colon\N\to[0,1/2]$. If
    \begin{align}\label{eq:nayak-condition}
        m(n)<\bigl(1-H(\epsilon(n))\bigr)\,k(n)\qquad\text{for infinitely many }n,
    \end{align}
    then there is an $R\colon\bit^*\to\bit$ such that $L^{(k)}_R\notin\BQTIME(t)_\epsilon/\q m$ for every $t\colon\N\to\N$. In particular $L^{(k)}_R\notin\BQP_\epsilon/\q m$.
\end{theorem}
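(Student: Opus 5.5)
We diagonalize against all quantum algorithms simultaneously, using \thm{nayak} at a separate input length for each machine. First we enumerate the quantum algorithms as $Q^{(1)},Q^{(2)},\ldots$, where each is a classical Turing machine producing circuits. The running time $t$ plays no role: for a fixed $n$, whatever circuit $Q^{(e)}_n$ is produced, we only use that it defines, for every $x\in\bit^n$, a two-outcome measurement on the $m(n)$-qubit advice register. Let $N$ be the infinite set of lengths $n$ satisfying \eq{nayak-condition}. Choose distinct lengths $n_1<n_2<\cdots$ from $N$ so that every index $e$ is assigned to infinitely many of them. For example, use a pairing function over the enumeration so that each $e$ gets infinitely many lengths. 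This is necessary because membership in $\BQTIME(t)_\epsilon/\q m$ only requires a machine that works on all lengths, so defeating $Q^{(e)}$ at one length per index already suffices. Still, reserving infinitely many lengths is harmless.

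At a length $n$ assigned to $e$, we define a decoder. For $i\in[k(n)]$, let $x^{(i)}\in\bit^n$ be the string with $\idx{x^{(i)}}=i$; this uses $k(n)\leq 2^n$. Set $M_i^1$ to be the POVM element that runs $Q^{(e)}_n$ on $\ket{x^{(i)}}\otimes\rho\otimes\ket{0\cdots0}$ and measures the output qubit with result $1$, and set $M_i^0:=I-M_i^1$. This is a legitimate two-outcome measurement on $m(n)$ qubits: it is obtained by appending ancillas, applying a unitary circuit, and measuring. Now suppose that for every $z\in\bit^{k(n)}$ there were a density operator $\rho_z$ with $\tr(M_i^{z_i}\rho_z)\geq1-\epsilon(n)$ for all $i$. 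Then $(z\mapsto\rho_z,\{M_i\})$ would be a $k(n)\xmapsto{\epsilon(n)}m(n)$ quantum random access code. By \thm{nayak}, which allows mixed encodings, this would give $m(n)\geq(1-H(\epsilon(n)))k(n)$, contradicting $n\in N$. Hence some $z^{(n)}\in\bit^{k(n)}$ has the property that every advice state $\rho$ leads to some $i$ with $\Pr[Q^{(e)}_n(x^{(i)},\rho)\neq z^{(n)}_i]>\epsilon(n)$. Define $R$ on $\bit^n$ by $R(x^{(i)}):=z^{(n)}_i$, and set $R$ arbitrarily (say $0$) elsewhere and on unassigned lengths. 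Then $z_n(R)=z^{(n)}$.

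Finally we conclude. Suppose $L^{(k)}_R\in\BQTIME(t)_\epsilon/\q m$ via some algorithm $Q=Q^{(e)}$ and advice $(\rho_n)$. Take any length $n$ assigned to $e$. By \eq{L-vs-z}, $L^{(k)}_R(x^{(i)})=z^{(n)}_i$, so the choice of $z^{(n)}$ gives some $x^{(i)}$ on which the error exceeds $\epsilon(n)$, a contradiction. Since $t$ was arbitrary, the same $R$ works for every $t$, and the statement about $\BQP_\epsilon/\q m$ follows by taking the union over $t=n^c$.

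The main subtlety is the quantifier order, since $R$ must be fixed before $t$. We handle it by enumerating all quantum algorithms regardless of their running time; every language in any $\BQTIME(t)_\epsilon/\q m$ is decided by some enumerated machine. A second point is that the algorithm receives $m(n)$ as input, and $m$ may be uncomputable. This is also harmless, because at each fixed $n$ the circuit $Q^{(e)}_n$ is a well-defined object, and we only need its existence, not its computability.
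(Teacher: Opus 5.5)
Your proof is correct and is essentially the paper's argument: you enumerate all quantum algorithms for advice length $m$ regardless of running time, use \thm{nayak} at a length satisfying \eq{nayak-condition} to obtain a string with no good advice state for the machine assigned to that length, and hard-wire these strings into $R$. The only difference is that you reserve infinitely many lengths per machine, which is harmless but unnecessary; the paper uses a single length $n_i$ per machine $Q^{(i)}$, and your sentence calling this ``necessary'' contradicts your own correct explanation that defeating each machine at one length suffices.
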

\begin{proof}
    Let $S$ be the (infinite) set of $n$ satisfying \eq{nayak-condition}, and let $n_1<n_2<\cdots$ be its elements.
    The quantum algorithms of \dfn{classes}(b) for advice length $m$ are given by classical Turing machines (those that output a circuit on every input $1^n$ with $m(n)$), so there are countably many of them; let $Q^{(1)},Q^{(2)},\ldots$ be an enumeration of all of them, with no restriction on their running time. (The enumeration need not be effective, as $R$ is not required to be computable.)

    Fix $i$ and put $n:=n_i$. For $z\in\bit^{k(n)}$, say that a density operator $\rho$ on $m(n)$ qubits is \emph{good for $z$} (with respect to $Q^{(i)}$) if
    \begin{align}
        \Pr\bigl[Q^{(i)}_n(x,\rho)\neq z_{\idx{x}}\bigr]\leq\epsilon(n)\qquad\text{for every }x\in\bit^n\text{ with }\idx{x}\leq k(n).
    \end{align}
    We claim that some $z\in\bit^{k(n)}$ has no good state. Suppose otherwise, and choose a good state $\rho_z$ for every $z$. Consider the following $k(n)\xmapsto{\epsilon(n)}m(n)$ quantum random access code: the encoding of $z$ is $\rho_z$, and the decoding measurement for an index $j\in[k(n)]$ consists of running $Q^{(i)}_n$ on the input $x_j\in\bit^n$ with $\idx{x_j}=j$, the advice register and fresh ancillas, and measuring the output qubit (this is a two-outcome measurement on the advice register). By goodness it recovers $z_j$ with probability at least $1-\epsilon(n)$ for every $z$ and $j$, so \thm{nayak} gives $m(n)\geq(1-H(\epsilon(n)))k(n)$, contradicting $n\in S$. Let $z^{(i)}\in\bit^{k(n_i)}$ be a string with no good state.

    Define $R$ by $R(x):=z^{(i)}_{\idx{x}}$ if $|x|=n_i$ for some $i$ and $\idx{x}\leq k(n_i)$, and $R(x):=0$ for all other $x$. Then $z_{n_i}(R)=z^{(i)}$ for every $i$.
    Suppose that $L:=L^{(k)}_R\in\BQTIME(t)_\epsilon/\q m$ for some $t$, via a quantum algorithm $Q$ and advice states $(\rho_n)_{n}$. Then $Q=Q^{(i)}$ for some $i$. At length $n:=n_i$, for every $x\in\bit^n$ with $\idx{x}\leq k(n)$ we have $L(x)=z^{(i)}_{\idx{x}}$ by \eq{L-vs-z} and $\Pr[Q^{(i)}_n(x,\rho_n)\neq L(x)]\leq\epsilon(n)$, so $\rho_n$ is good for $z^{(i)}$---a contradiction.
\end{proof}

\subsection{The upper bound}

For the matching upper bound we encode $z_n(R)$ with the random access code of \thm{dec-error-bound} and use the encoding as a randomized advice string.
The decoder needs to know $k(n)$ (to recognize the inputs $x$ with $\idx{x}>k(n)$) and the parameters $\ell$ and $d$ of \prot{toeplitz-rac}. Rather than assuming that these functions are computable within the time bound, we store them in a self-delimiting header of $O(\log k)$ bits. We also distinguish the RAC decoding budget $s(n)$ from the total running time $O(n^2+s(n))$. The resulting loss in the error bound is absorbed by \lem{stability}.

\begin{theorem}\label{thm:completeness-2}
    There are absolute constants $c_1,c\geq1$ with the following property.
    Let $m,k,s\colon\N\to\N$ with $k(n)\leq2^n$ for all $n$, and let $R\colon\bit^*\to\bit$ be arbitrary.
    Call $n$ \emph{admissible} if $m(n)\leq k(n)$ and $s(n)\geq c_1\log^2(2k(n))$; for admissible $n$ define
    \begin{align}
        \epsilon_1(n):=H^{-1}\!\left(1-\frac{m(n)}{k(n)}\right),\qquad
        \eta(n):=\frac{\log s(n)}{s(n)}+\frac{s(n)}{k(n)},
    \end{align}
    and let
    \begin{align}\label{eq:eps-prime}
        \epsilon'(n):=\begin{cases}
            \min\bigl\{1/2,\ \epsilon_1(n)+c\,\lossfn(\epsilon_1(n),\eta(n))\bigr\} & \text{if $n$ is admissible},\\[2pt]
            1/2 & \text{otherwise}.
        \end{cases}
    \end{align}
    Then $L^{(k)}_R\in\DTIME(n^2+s)_{\epsilon'}/\r m$.
\end{theorem}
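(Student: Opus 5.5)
The plan is to build, for each admissible $n$, a randomized advice string that consists of a short self-delimiting parameter header followed by a codeword of the random access code of \thm{dec-error-bound} for $z_n(R)$, but with a slightly reduced codeword length. Non-admissible $n$ are handled separately. For those, and also for any admissible $n$ at which the header-adjusted bound reaches $1/2$, I would use the trivial code, that is, a uniform advice string whose first bit is output. As in \lem{basic}(b), this gives error exactly $1/2=\epsilon'(n)$. To keep the machine uniform, the first advice bit will be a flag distinguishing this trivial mode from the coded mode.

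For admissible $n$ the header stores $k(n)$, $\ell$ and $d$ in a self-delimiting encoding (for instance, each $w$-bit number preceded by its length in unary-doubled form). Since $\ell,d<k\le 2^n$, the header has length $h_0=O(\log(2k))$, and I will take $h_0\le c_3\log(2k)$. The machine then works as follows on input $x$ of length $n$:
\begin{itemize}
\item it reads the flag;
\item it parses the header in $O(\log^2(2k))$ steps;
\item it computes $\idx{x}$ and compares it with $k(n)$, which takes $O(n^2)$ steps by schoolbook arithmetic on $n$-bit numbers, and rejects if $\idx{x}>k(n)$;
\item otherwise it runs the decoder of \lem{encoding-length}\ref{itm:el-b} with $\ell,d$ supplied as $w$-bit integers, on the codeword region, with index offsets shifted by $1+h_0$.
\end{itemize}
Inputs with $\idx{x}>k$ are answered with certainty. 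For the other inputs the error equals the RAC error by \eq{L-vs-z}.

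The codeword gets length $m_1:=m-1-h_0$, and I apply \thm{dec-error-bound} with $(k,m_1,t:=s)$, provided $c_1\ge c_0$ and $m_1\ge1$. The theorem gives error at most $\min\{1/2,\hat\epsilon_1+c_{\ref{thm:dec-error-bound}}\lossfn(\hat\epsilon_1,\eta)\}$, where $\hat\epsilon_1:=H^{-1}(1-m_1/k)$. Since $m$ and $m_1$ differ by $O(\log(2k))$, we have $H(\hat\epsilon_1)-H(\epsilon_1)\le (1+c_3)\log(2k)/k=:\theta$. Because $s\le k$ in the nontrivial regime and $\log t/t$ is decreasing, $\theta\le c_4\log s/s\le c_4\eta$. \lem{stability} with $a=c_{\ref{thm:dec-error-bound}}$ then bounds this by $\min\{1/2,\epsilon_1+(2a+1)\lossfn(\epsilon_1,(1+c_4)\eta)\}$, and \ref{itm:P1} turns it into $\epsilon_1+c\lossfn(\epsilon_1,\eta)$ with $c:=(2a+1)(1+c_4)$. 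The running time is $O(n^2+\log^2(2k)+s)=O(n^2+s)$, since admissibility gives $\log^2(2k)\le s$. This holds for every advice string in the support, with a uniform constant.

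The main obstacle is the boundary bookkeeping, and there are two parts to it. The first is the degenerate cases: $m_1\le0$, $s>k$, or $\ell\ge k$ (when the theorem itself outputs the trivial code). In each of these I must check that $\epsilon'(n)=1/2$ anyway. I would do this as in Step~0 and Step~2 of \thm{dec-error-bound}: if $m\le c_5\log(2k)$, then $1-H(\epsilon_1)\le c_5\eta$, so \lem{entropy-perturbation} gives $\lossfn(\epsilon_1,\eta)\gtrsim 1/2-\epsilon_1$ and hence $\epsilon'=1/2$ once $c$ is large; if $s>k$, then $\eta>1>\bar\eta$ and \ref{itm:P3} applies. The second is the machine itself: it has to remain a single fixed machine, with jump-based access whose index arithmetic is on $w$-bit numbers (where $w=O(\log(2k))\le O(n)$), while $\ell$, $d$ and $k$ are never computed but only read from the advice. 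The choice of the seed distribution is unconstrained, so nonuniformity of $R$ causes no difficulty.
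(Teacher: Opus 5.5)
Your proposal is correct and follows the paper's proof essentially step for step. Both use a flag bit and a self-delimiting header of $O(\log(2k))$ bits carrying $k$, $\ell$, $d$, apply \thm{dec-error-bound} at the reduced length, absorb the header's entropy cost with \lem{stability}, and show that every degenerate case forces $\epsilon'(n)=1/2$. Two differences are cosmetic: you bound the header cost through $\log s/s$ after disposing of $s>k$, where the paper bounds it directly by $4s/k\leq 4\eta$, and you switch to your own trivial mode when the bound reaches $1/2$, where the paper signals the trivial code by $\ell=d=0$.

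One small repair is needed. In the trivial mode the output bit cannot be the first advice bit if that bit is the flag. As in the paper, use advice $(0,r,0^{m-2})$ and output $a_2$. For $m(n)=1$, output $a_1$ when the second advice cell is blank.
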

\begin{proof}
    Let $c_0,c_2$ and $c'$ be the constants of \thm{dec-error-bound} ($c'$ is the constant called $c$ there), and set $c_1:=c_0$ and $c:=5(2c'+1)$.
    We describe, for every $n$, the advice distribution $D_n$ on $\bit^{m(n)}$ and the behaviour of a single machine $M$ on inputs of length $n$. The machine knows neither $m(n)$, $k(n)$ nor $s(n)$; the first advice bit $a_1$ will indicate whether a parameter header is present.
    Write
    \begin{align}
        w:=\floor{\log k(n)}+1,\qquad g:=4w.
    \end{align}
    For an integer $v\in\{0,\ldots,2^w-1\}$ let $\mathrm{bin}_w(v)\in\bit^w$ be its binary representation. The string $0^{w-1}\mathrm{bin}_w(k(n))$ is self-delimiting: its initial run of zeros determines $w$, since the representation of $k(n)$ starts with $1$.

    \emph{Non-admissible $n$, or $m(n)\leq g$.} If $m(n)=1$, let $D_n$ be uniform on $\bit$. If $m(n)\geq2$, let $D_n$ be the distribution of $a=(0,r,0^{m(n)-2})$ with $r$ a uniformly random bit.

    \emph{Admissible $n$ with $m(n)\geq g+1$.} Write $k=k(n)$, $m=m(n)$, $s=s(n)$, and let $\tilde m:=m-g\geq1$. Since $\tilde m\leq k$ and $s\geq c_0\log^2(2k)$, \thm{dec-error-bound}, applied to $(k,\tilde m,s)$, provides a $k\xmapsto{\tilde\epsilon}\tilde m$ random access code $(\Enc,\Dec)$ with
    \begin{align}\label{eq:tilde-eps}
        \tilde\epsilon\leq\min\bigl\{1/2,\ \tilde\epsilon_1+c'\lossfn(\tilde\epsilon_1,\eta)\bigr\},
        \qquad\tilde\epsilon_1:=H^{-1}\!\left(1-\frac{\tilde m}{k}\right),
    \end{align}
    which is either the trivial code or \prot{toeplitz-rac} with parameters $(\ell,d,B)$ as in \eq{parameters}, with $t=s$ there and $d<\ell<k$.
    The advice distribution $D_n$ is the distribution of
    \begin{align}
        a:=\bigl(1,\ 0^{w-1},\ \mathrm{bin}_w(k),\ \mathrm{bin}_w(\ell),\ \mathrm{bin}_w(d),\ \Enc(z_n(R),\omega)\bigr)\in\bit^{g+\tilde m}=\bit^m
    \end{align}
    for a random seed $\omega$, where in the case of the trivial code we set $\ell=d:=0$. The header has $1+(2w-1)+w+w=g$ bits. We call the last $\tilde m$ bits the \emph{payload}; it starts at position $g+1$.

    On input $x\in\bit^n$ and advice $a$, the machine $M$ proceeds as follows.
    \begin{enumerate}[label=(\roman*)]
        \item It reads $a_1$ and the content of the second advice cell. If the second cell is blank (the advice has length $1$), it outputs $a_1$. If $a_1=0$, it outputs $a_2$. These steps take $O(1)$ time.
        \item Otherwise ($a_1=1$), starting at the second advice cell it reads the initial run of zeros to determine $w$, and then reads the remaining bits of $k$ and the $w$-bit fields for $\ell$ and $d$. It computes $\idx{x}$ from the input; if $\idx{x}>k$, it outputs $0$. Parsing and these computations take $O(n^2+w)$ time.
        \item If $\ell=0$, it outputs the first payload bit (position $g+1$).
        \item Otherwise it runs $\Dec(\idx{x},\cdot)$ on the payload, with the parameters $\ell$ and $d$ read from the header. After discarding leading zeros from its binary representation, the index has at most $w$ bits. Also, $B\ell<2k$, $m'\leq\tilde m\leq k$ and $\ell,d<k$, so $w+1$ bounds the bit lengths needed in \lem{encoding-length}\ref{itm:el-b}. Adding the offset $g$ costs $O(w)$ per jump. Since the decoder
uses a constant number of jumps and reads each requested segment
sequentially, these accesses cost $O(d+w)$ time. Together with
the $O(w^2)$ arithmetic needed to compute the addresses, the
total running time is $O(d+w^2)$.
    \end{enumerate}

    \emph{Running time.} In the first case the machine outputs a uniformly random bit in $O(1)$ steps. In the case with a parameter header, $w\leq\log(2k)$ and $d<\ell\leq s/c_2$ for the nontrivial code, while $s\geq c_0\log^2(2k)$. Thus the machine halts within $O(n^2+s)$ steps on every advice string in the support of $D_n$.

    \emph{Error when no parameter header is present.} At non-admissible lengths the error is $1/2=\epsilon'(n)$. If $n$ is admissible but $m\leq g$, then
    \begin{align}
        1-H(\epsilon_1)=\frac{m}{k}\leq\frac{4w}{k}\leq\frac{4\log(2k)}{k}\leq\frac{4s}{k}\leq4\eta.
    \end{align}
    By \lem{entropy-perturbation} with $\epsilon_0=1/2$ and \ref{itm:P1},
    \begin{align}
        \frac{1}{2}-\epsilon_1\leq\lossfn(\epsilon_1,1-H(\epsilon_1))\leq4\lossfn(\epsilon_1,\eta)\leq c\lossfn(\epsilon_1,\eta).
    \end{align}
    Hence $\epsilon'(n)=1/2$ here as well. Thus insufficient space for the header requires no additional admissibility condition.

    \emph{Error when a parameter header is present.} If $\idx{x}>k$ then $M(x,a)=0=L^{(k)}_R(x)$ for every $a$ in the support. Otherwise, \eq{L-vs-z} and \eq{rac-condition} give $\Pr_{a\sim D_n}[M(x,a)\neq L^{(k)}_R(x)]\leq\tilde\epsilon$ (including the trivial-code case). Moreover,
    \begin{align}
        \epsilon_1\leq\tilde\epsilon_1,\qquad H(\tilde\epsilon_1)-H(\epsilon_1)=\frac{g}{k}\leq\frac{4s}{k}\leq4\eta.
    \end{align}
    Applying \lem{stability} with $\theta:=4\eta$ and $a:=c'$, followed by \ref{itm:P1}, yields
    \begin{align}
        \tilde\epsilon\leq\min\bigl\{1/2,\ \epsilon_1+(2c'+1)\lossfn(\epsilon_1,5\eta)\bigr\}\leq\min\bigl\{1/2,\ \epsilon_1+5(2c'+1)\lossfn(\epsilon_1,\eta)\bigr\}=\epsilon'(n).
    \end{align}
    Altogether, $M$ and $(D_n)_n$ witness $L^{(k)}_R\in\DTIME(n^2+s)_{\epsilon'}/\r m$.
\end{proof}

\subsection{The separations}

Combining the two theorems, allowing a slightly larger error while keeping the advice length fixed is strictly more powerful. We first state the general form, in which the RAC decoding budget is a parameter, and then specialize to polynomial time.

\begin{corollary}\label{cor:separation-2}
    Let $c_1,c$ be the constants of \thm{completeness-2}. Let $\epsilon\colon\N\to[0,1/2)$ and $m,s\colon\N\to\N$, and put
    \begin{align}\label{eq:k-choice}
        k(n):=\min\left\{\left\lfloor{\frac{m(n)}{1-H(\epsilon(n))}}\right\rfloor+1,\ 2^n\right\},\qquad
        \eta(n):=\frac{\log s(n)}{s(n)}+\frac{s(n)}{k(n)}.
    \end{align}
    Call $n$ \emph{good} if $m(n)<(1-H(\epsilon(n)))\,2^n$ and $s(n)\geq c_1\log^2(2k(n))$, and define
    \begin{align}\label{eq:delta-def}
        \delta(n):=\begin{cases}
            \min\bigl\{1/2,\ (2c+1)\,\lossfn\bigl(\epsilon(n),\ \eta(n)+1/m(n)\bigr)\bigr\} & \text{if $n$ is good},\\[2pt]
            1/2 & \text{otherwise}.
        \end{cases}
    \end{align}
    If infinitely many $n$ are good, then
    \begin{align}
        \DTIME(n^2+s)_{\epsilon+\delta}/\r m\not\subseteq\BQP_\epsilon/\q m .
    \end{align}
\end{corollary}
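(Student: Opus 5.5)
The plan is to apply \thm{soundness-2} and \thm{completeness-2} to the same language $L^{(k)}_R$, with $k$ as in \eq{k-choice}. First, the lower bound. At every good $n$ we have $m(n)<(1-H(\epsilon(n)))2^n$, so $k(n)=\floor{m/(1-H(\epsilon))}+1$ unless the cap $2^n$ is active. In the uncapped case $k>m/(1-H(\epsilon))$, i.e.\ $m<(1-H(\epsilon))k$. In the capped case $k=2^n$ and the goodness condition gives the same strict inequality. Since infinitely many $n$ are good, \eq{nayak-condition} holds infinitely often. \thm{soundness-2} then produces an $R$ with $L^{(k)}_R\notin\BQP_\epsilon/\q m$; note that $\epsilon<1/2$ also gives $1-H(\epsilon)>0$, so $k$ is well defined.

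Second, the upper bound, for this same $R$. \thm{completeness-2} gives $L^{(k)}_R\in\DTIME(n^2+s)_{\epsilon'}/\r m$, and by \lem{basic}(a) it suffices to show $\epsilon'(n)\leq\epsilon(n)+\delta(n)$ for every $n$. At non-good $n$ we have $\delta=1/2$, so $\epsilon+\delta\geq1/2\geq\epsilon'$ trivially. The same holds whenever the min in \eq{delta-def} equals $1/2$. So fix a good $n$.

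Then $m\leq k$: either $k=\floor{m/(1-H(\epsilon))}+1>m$, or $k=2^n>m$ by goodness. Together with $s\geq c_1\log^2(2k)$ this makes $n$ admissible. It remains to compare $\epsilon_1=H^{-1}(1-m/k)$ with $\epsilon$.

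\emph{Case $k$ uncapped.} Here $m/k\geq(1-H(\epsilon))\cdot\frac{m}{m+(1-H(\epsilon))}$, hence $1-m/k\leq H(\epsilon)+(1-H(\epsilon))^2/(m+1-H(\epsilon))\leq H(\epsilon)+1/m$. So either $\epsilon_1\leq\epsilon$, or $H(\epsilon_1)-H(\epsilon)\leq 1/m$. In the second subcase apply \lem{stability} with $\hat\epsilon=\epsilon_1$, $\theta=1/m$, $a=c$. This gives
\[
\epsilon'\leq\min\{1/2,\ \epsilon+(2c+1)\lossfn(\epsilon,\eta+1/m)\}\leq\epsilon+\delta .
\]
In the first subcase, $\epsilon_1\leq\epsilon$, use \ref{itm:P2} and \ref{itm:P1} directly: $\epsilon_1+c\lossfn(\epsilon_1,\eta)\leq\epsilon+c\lossfn(\epsilon,\eta+1/m)$.

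\emph{Case $k=2^n$.} Now $m/k<1-H(\epsilon)$, so $1-m/k>H(\epsilon)$ and $\epsilon_1>\epsilon$. This case needs care. Since $k=2^n\leq m/(1-H(\epsilon))+1$, we get $m/k\geq(1-H(\epsilon))(1-1/k)$. Using $k>m$, this gives $H(\epsilon_1)-H(\epsilon)\leq(1-H(\epsilon))/k\leq 1/m$, and the same stability argument applies.

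The main obstacle is exactly this entropy bookkeeping: turning the rounding in the choice of $k$ into the additive $1/m$ in the entropy budget, and then transporting the error bound from $\epsilon_1$ to $\epsilon$ via \lem{stability} with the min-with-$1/2$ handled correctly. Once $\epsilon'\leq\epsilon+\delta$ holds pointwise, we get
\[
L^{(k)}_R\in\DTIME(n^2+s)_{\epsilon+\delta}/\r m\setminus\BQP_\epsilon/\q m .
\]
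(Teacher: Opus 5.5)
Your proof is correct and follows the paper's argument: you apply \thm{soundness-2} and \thm{completeness-2} to the same $R$, turn the rounding in $k$ into $H(\epsilon_1)\leq H(\epsilon)+1/m$, and transport the bound with \lem{stability} using $\theta=1/m$, $a=c$. Your case split is redundant but valid: at a good $n$ the integer $2^n$ strictly exceeds $m/(1-H(\epsilon))$, so $k=\floor{m/(1-H(\epsilon))}+1$ and $\epsilon<\epsilon_1$ always hold, and the paper uses this to dispense with the capped case and the subcase $\epsilon_1\leq\epsilon$.
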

\begin{proof}
    Let $n$ be good. Then $m(n)/(1-H(\epsilon(n)))<2^n$, so $k(n)=\floor{m(n)/(1-H(\epsilon(n)))}+1\leq2^n$ and $k(n)>m(n)/(1-H(\epsilon(n)))\geq m(n)$; thus $m(n)<(1-H(\epsilon(n)))\,k(n)$, and $n$ is admissible in the sense of \thm{completeness-2}.
    Since infinitely many $n$ are good, \thm{soundness-2} provides an $R$ with $L:=L^{(k)}_R\notin\BQP_\epsilon/\q m$, while \thm{completeness-2} shows $L\in\DTIME(n^2+s)_{\epsilon'}/\r m$ with $\epsilon'$ as in \eq{eps-prime}.
    By \lem{basic}(a) it suffices to show $\epsilon'(n)\leq\epsilon(n)+\delta(n)$ for every $n$.
    If $n$ is not good, then $\delta(n)=1/2\geq\epsilon'(n)$.
    Let $n$ be good and drop the argument $n$. From $k\leq m/(1-H(\epsilon))+1$,
    \begin{align}
        \frac{m}{k}\geq\frac{m\,(1-H(\epsilon))}{m+1-H(\epsilon)}\geq\frac{m\,(1-H(\epsilon))}{m+1}=(1-H(\epsilon))\Bigl(1-\frac{1}{m+1}\Bigr)\geq1-H(\epsilon)-\frac{1}{m},
    \end{align}
    so $H(\epsilon_1)=1-m/k\leq H(\epsilon)+1/m$; and $H(\epsilon)<1-m/k=H(\epsilon_1)$ gives $\epsilon<\epsilon_1$. By \lem{stability} with $\theta:=1/m$ and $a:=c$,
    \begin{align}
        \epsilon'=\min\{1/2,\ \epsilon_1+c\lossfn(\epsilon_1,\eta)\}\leq\min\{1/2,\ \epsilon+(2c+1)\lossfn(\epsilon,\eta+1/m)\}\leq\epsilon+\delta ,
    \end{align}
    where the last step uses $\epsilon'\leq1/2\leq\epsilon+1/2$ together with $\epsilon'\leq\epsilon+(2c+1)\lossfn(\epsilon,\eta+1/m)$.
\end{proof}

\begin{corollary}\label{cor:bqp-separation-1}
    In the situation of \cor{separation-2}, if moreover $s(n)\leq n^{O(1)}$, then $\P_{\epsilon+\delta}/\r m\not\subseteq\BQP_\epsilon/\q m$.
\end{corollary}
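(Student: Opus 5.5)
The plan is to deduce this directly from \cor{separation-2}, by showing that its language class lies inside the polynomial-time class. \cor{separation-2} gives a language $L$ with $L\in\DTIME(n^2+s)_{\epsilon+\delta}/\r m$ and $L\notin\BQP_\epsilon/\q m$. So it suffices to prove
\[
\DTIME(n^2+s)_{\epsilon+\delta}/\r m\subseteq\P_{\epsilon+\delta}/\r m
\]
whenever $s(n)\leq n^{O(1)}$.

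First, fix a constant $c_s$ with $s(n)\leq n^{c_s}$ for all sufficiently large $n$. Set $c^*:=\max\{2,c_s\}$. Then $n^2+s(n)\leq 2n^{c^*}$ for all large $n$. On the finitely many small $n$, the quantity $n^2+s(n)$ is bounded by a constant. Hence $n^2+s(n)=O(n^{c^*})$ holds uniformly.

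Second, the witnessing machine $M$ and distributions $(D_n)$ for $\DTIME(n^2+s)_{\epsilon+\delta}/\r m$ are the same objects that witness membership in $\DTIME(n^{c^*})_{\epsilon+\delta}/\r m$. Condition (i) of \dfn{classes}(a) is unchanged. Condition (ii) only requires halting within $O(t(n))$ steps with a uniform constant, and this is preserved when $t$ is replaced by any $t'$ with $t=O(t')$. Taking the union over $c$ then gives membership in $\P_{\epsilon+\delta}/\r m$.

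The only point needing care is this uniformity of constants: the "for all sufficiently large $n$" caveat in the hypothesis $s\leq n^{O(1)}$ has to be absorbed into a single constant. That is routine, so I expect no real obstacle. Note that no bound on $m$ is needed here, because the conclusion is stated for $\P/\r m$ rather than $\BQP$.
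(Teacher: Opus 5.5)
Your proposal is correct and matches the paper's proof: for polynomially bounded $s$ the running time $n^2+s(n)$ is polynomial, so $\DTIME(n^2+s)_{\epsilon+\delta}/\r m\subseteq\P_{\epsilon+\delta}/\r m$, and the separation from \cor{separation-2} transfers directly. Absorbing the finitely many small $n$ into a single constant is a correct, routine spelling-out of that one-line argument.
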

\begin{proof}
    For polynomially bounded $s$, the total running time $n^2+s(n)$ is polynomial, so $\DTIME(n^2+s)_{\epsilon+\delta}/\r m\subseteq\P_{\epsilon+\delta}/\r m$.
\end{proof}

\begin{remark}\label{rmk:regimes}
    (a) The statements are meaningful only at lengths where $\epsilon(n)+\delta(n)<1/2$; if $\epsilon+\delta\geq1/2$ at some length, the classical class contains every language there (\lem{basic}(b)), which is consistent with, but says nothing beyond, the corollaries. For the choice $s\approx\sqrt{k\log k}$ discussed after \thm{dec-error-bound}, $\delta(n)\to0$ along good $n$ as $m(n)\to\infty$, as long as $\epsilon(n)$ stays bounded away from $1/2$. The $O(n^2)$ cost of processing the input does not enter the coding overhead $\eta$.
    (b) \cor{bqp-separation-1} allows $m(n)$ up to nearly $(1-H(\epsilon(n)))2^n$. For such long advice the polynomial-time machine can only use the advice through random access (\dfn{model}); when $m$ is polynomially bounded, random access is irrelevant up to polynomial factors.
    (c) \thm{soundness-2} rules out quantum algorithms of arbitrary running time, so the lower bound is information-theoretic; only the upper bound uses the efficiency of the decoder.
\end{remark}

Finally we specialize to polynomially bounded advice.

\begin{corollary}\label{cor:final}
    There is an absolute constant $c^*$ with the following property.
    Let $m\colon\N\to\N$ be polynomially bounded with $m(n)=\omega(1)$, and let $\epsilon\colon\N\to[0,1/2)$ satisfy $1/2-\epsilon(n)\geq n^{-b}$ for some constant $b$ and all sufficiently large $n$.
    Then there is a function $\delta\colon\N\to(0,1/2]$ with
    \begin{align}\label{eq:final-delta}
        \delta(n)\leq c^*\min\left\{\Bigl(\frac{\log m(n)}{m(n)}\Bigr)^{1/4},\ \frac{1}{1/2-\epsilon(n)}\sqrt{\frac{\log m(n)}{m(n)}}\,\right\}\qquad\text{for all sufficiently large }n,
    \end{align}
    such that
    \begin{align}
        \P_{\epsilon+\delta}/\r m\not\subseteq\BQP_\epsilon/\q m
        \qquad\text{and consequently}\qquad
        \BQP_\epsilon/\q m\subsetneq\BQP_{\epsilon+\delta}/\q m .
    \end{align}
    In particular, for every constant $\epsilon\in(0,1/2)$ the separation holds with $\delta(n)=O_\epsilon\bigl(\sqrt{\log m/m}\bigr)$ and $\epsilon+\delta(n)<1/2$ for all sufficiently large $n$. Uniformly in $\epsilon$, the gap satisfies $\delta(n)=O\bigl((\log m/m)^{1/4}\bigr)$.
\end{corollary}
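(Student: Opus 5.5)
We derive the statement from \cor{bqp-separation-1}, choosing the decoding budget $s(n):=\max\{\ceil{\sqrt{k(n)\log k(n)}},\,c_1\ceil{\log^2(2k(n))}\}$, with $k(n)$ as in \eq{k-choice}. We then take $\delta$ to be the function of \eq{delta-def}, and bound it at good lengths. The final inclusion $\BQP_\epsilon/\q m\subsetneq\BQP_{\epsilon+\delta}/\q m$ follows from the first separation together with \lem{basic}(c) and \lem{basic}(a). The inclusion part uses $\epsilon\le\epsilon+\delta$, and strictness comes from the language witnessing the non-inclusion.

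The first step checks the hypotheses for all sufficiently large $n$. Since $1/2-\epsilon(n)\ge n^{-b}$, we have $1-H(\epsilon(n))\ge(2/\ln 2)(1/2-\epsilon)^2\ge \Omega(n^{-2b})$, by the calculation in the proof of \lem{entropy-perturbation}. Hence $k(n)\le m(n)\,O(n^{2b})+1$ is polynomially bounded, so $m(n)<(1-H(\epsilon))2^n$ for large $n$. The choice of $s$ guarantees $s\ge c_1\log^2(2k)$, and $s$ is polynomially bounded. Thus every large $n$ is good, and \cor{bqp-separation-1} applies. This also gives $\delta(n)>0$, because $\lossfn(\epsilon,\eta)>0$ for $\eta>0$.

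The second step bounds $\delta$. We have $k\ge m$, and $k\to\infty$ since $m=\omega(1)$. For large $k$ the balancing term dominates, $s=\ceil{\sqrt{k\log k}}$, and the Discussion after \thm{dec-error-bound} gives $\eta\le4\sqrt{\log k/k}$. Because $x\mapsto\sqrt{\log x/x}$ is decreasing for $x\ge3$, this yields $\eta\le4\sqrt{\log m/m}$. We also have $1/m\le\sqrt{\log m/m}$, so $\eta+1/m\le5\sqrt{\log m/m}$.

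By \ref{itm:P4},
\[
\delta\le(2c+1)\min\Bigl\{\kappa\sqrt{5}\,(\log m/m)^{1/4},\ \frac{5\kappa^2}{1/2-\epsilon}\sqrt{\log m/m}\Bigr\},
\]
which gives \eq{final-delta} with $c^*:=5(2c+1)$. For constant $\epsilon$ the second term is $O_\epsilon(\sqrt{\log m/m})\to0$, so $\epsilon+\delta<1/2$ eventually. The uniform bound is the first term.

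The main obstacle is the first step. The hypothesis that the margin $1/2-\epsilon$ is inverse-polynomial must keep $k$, and hence $s$, polynomially bounded, so that the running time $n^2+s$ stays polynomial. We also need to check that $k\le 2^n$ is not the binding term in \eq{k-choice}, and handle it carefully. The monotonicity step from $k$ to $m$ is routine, but it must be stated for large $m$ only.
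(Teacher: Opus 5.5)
Your proposal is correct and follows essentially the same route as the paper. You use the same $k$, $s$ and $\delta$, the same check that all large $n$ are good via $1-H(\epsilon)\geq(1/2-\epsilon)^2/\kappa^2$, and the same monotonicity step from $k$ to $m$. The only difference is cosmetic: you cite the Discussion's bound $\eta\leq4\sqrt{(\log k)/k}$ for $s=\ceil{\sqrt{k\log k}}$ (valid for $k\geq4$) to get $c^*=5(2c+1)$, whereas the paper rederives the cruder $\eta\leq5\sqrt{(\log k)/k}$ from $s\leq3\sqrt{k\log k}$ and takes $c^*=(2c+1)\max\{\kappa\sqrt6,\,6\kappa^2\}$.
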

\begin{proof}
    Define $k$ by \eq{k-choice}, let
    \begin{align}\label{eq:small-advice-budget}
        s(n):=\max\left\{\ceil{\sqrt{k(n)\log k(n)}},\ \ceil{c_1\log^2(2k(n))}\right\}
    \end{align}
    with $c_1$ from \thm{completeness-2}, and let $\delta$ be given by \eq{delta-def}; it is positive since $\eta(n)+1/m(n)>0$.

    \emph{All sufficiently large $n$ are good.} By \lem{entropy-perturbation} with $\epsilon_0=1/2$ and $\epsilon_1=\epsilon(n)$ we have $1/2-\epsilon(n)\leq\kappa\sqrt{1-H(\epsilon(n))}$, hence $1-H(\epsilon(n))\geq\kappa^{-2}n^{-2b}$ for large $n$. As $m$ is polynomially bounded, $m(n)<\kappa^{-2}n^{-2b}2^n\leq(1-H(\epsilon(n)))2^n$ for large $n$. The condition $s(n)\geq c_1\log^2(2k(n))$ holds by construction, so all sufficiently large $n$ are good in \cor{separation-2}.
    Consequently $k(n)=\floor{m(n)/(1-H(\epsilon(n)))}+1\leq\kappa^2n^{2b}m(n)+1$ for large $n$, so $k$, $s$ and the total running time $n^2+s(n)$ are polynomially bounded. Thus \cor{bqp-separation-1} yields $\P_{\epsilon+\delta}/\r m\not\subseteq\BQP_\epsilon/\q m$.
    By \lem{basic}(c), $\P_{\epsilon+\delta}/\r m\subseteq\BQP_{\epsilon+\delta}/\q m$, hence $\BQP_{\epsilon+\delta}/\q m\not\subseteq\BQP_\epsilon/\q m$; together with $\BQP_\epsilon/\q m\subseteq\BQP_{\epsilon+\delta}/\q m$ (\lem{basic}(a)) this gives the proper inclusion.

    \emph{The bound on $\delta$.} Since $k(n)\geq m(n)\to\infty$ for all sufficiently large $n$, eventually $\sqrt{k\log k}\geq c_1\log^2(2k)$ and hence $s=\ceil{\sqrt{k\log k}}$. In particular,
    $\sqrt{k\log k}\leq s\leq3\sqrt{k\log k}$ and $\log s\leq2\log k$ for all sufficiently large $n$. Therefore
    \begin{align}
        \frac{\log s}{s}\leq\frac{2\log k}{\sqrt{k\log k}}=2\sqrt{\frac{\log k}{k}},
        \qquad
        \frac{s}{k}\leq\frac{3\sqrt{k\log k}}{k}=3\sqrt{\frac{\log k}{k}},
    \end{align}
    so $\eta\leq5\sqrt{(\log k)/k}\leq5\sqrt{(\log m)/m}$, because $x\mapsto(\log x)/x$ is decreasing for $x\geq3$ and $k\geq m$. Moreover $1/m\leq\sqrt{(\log m)/m}$ for $m\geq2$. Hence $\eta+1/m\leq\xi:=6\sqrt{(\log m)/m}$, and by \ref{itm:P1} and \ref{itm:P4},
    \begin{align}
        \delta(n)\leq(2c+1)\,\lossfn\Bigl(\epsilon,\eta+\frac{1}{m}\Bigr)
        &\leq(2c+1)\min\left\{\kappa\sqrt\xi,\ \frac{\kappa^2\xi}{1/2-\epsilon}\right\}\nonumber\\
        &\leq c^*\min\left\{\Bigl(\frac{\log m}{m}\Bigr)^{1/4},\ \frac{1}{1/2-\epsilon}\sqrt{\frac{\log m}{m}}\,\right\}
    \end{align}
    with $c^*:=(2c+1)\max\{\kappa\sqrt6,\ 6\kappa^2\}$. For fixed $\epsilon<1/2$, the second bound tends to zero because $m(n)\to\infty$, giving the final assertion.
\end{proof}

\section*{AI statement}
AI-assisted tools, primarily ChatGPT 5, and ChatGPT 6, and Fable 5.1, were used extensively in the development of this work. Beyond assistance with writing and presentation, AI contributed to the technical development of the paper. In particular, AI assistance was used to adapt the diagonalization argument, and to propose the Toeplitz-matrix-based coding construction used to reduce the advice overhead. AI was also used to explore alternative constructions and parameter choices, derive and check intermediate bounds, identify potential gaps in proofs, and improve the presentation of the resulting arguments. 


These AI-generated ideas and derivations were subsequently examined, modified where necessary, and independently checked by the authors before being incorporated into the manuscript. The authors take responsibility for the correctness of all statements and proofs in the final manuscript.

\section*{Acknowledgments}

SHH is supported by National Science and Technology Council in Taiwan, Grant~No.~114-2222-E-002-002-MY3, and by National Taiwan University, Grant~No.~NTU-CC-114L895503.
HHL is supported by NSTC under project numbers 115-2112-M-007-006 and 114-2221-E-007-083-MY3, and by funding of Taiwan Centers of Excellence (TCE).

\bibliographystyle{plainurl}
\bibliography{LLM-pieces/separation_advice_references}

\end{document}